\newcommand{\A}{\mathcal{A}}
\newtheorem{theorem}{Theorem}[section]
\newtheorem{lemma}[theorem]{Lemma}
\newtheorem{proposition}[theorem]{Proposition}
\newtheorem{corollary}[theorem]{Corollary}
\newtheorem*{lem}{Lemma \ref{lem:bd2}}
\newcommand{\rem}[1]{{\bf Remark:}}
\newcommand{\Section}[1]{\setcounter{equation}{0}\section{#1}}
\def\idty{{\mathchoice {\rm 1\mskip-4mu l} {\rm 1\mskip-4mu l} %
{\rm 1\mskip-4.5mu l} {\rm 1\mskip-5mu l}}}
\newcommand{\be}{\begin{equation}}
\newcommand{\ee}{\end{equation}}
\newcommand{\bea}{\begin{eqnarray}}
\newcommand{\eea}{\end{eqnarray}}
\newcommand{\beann}{\begin{eqnarray*}}
\newcommand{\eeann}{\end{eqnarray*}}
\newcommand{\unity}{{1\hskip -3pt \rm{I}}}
\newcommand{\X}{\mathcal{X}}
\renewcommand{\P}{\mathcal{P}}
\begin{document}

\renewcommand{\thefootnote}{\fnsymbol{footnote}}
\title{Approximating the ground state of gapped quantum spin systems}

\author{Eman Hamza}
\address{ Department of Mathematics\\
Michigan State University\\
East Lansing, MI 48823, USA} \email{eman@math.msu.edu}
\author{Spyridon Michalakis}
\address{Los Alamos National Labs \\
Los Alamos, NM 87545, USA}
\email{spiros@lanl.gov}
\author{Bruno Nachtergaele}
\address{Department of Mathematics\\
University of California Davis\\
Davis, CA 95616, USA} \email{bxn@math.ucdavis.edu}
\author{Robert Sims}
\address{Department of Mathematics\\
University of Arizona\\
Tucson, AZ 85721, USA}
 \email{rsims@math.arizona.edu}
\date{Version: \today }
\maketitle
\bigskip
\begin{abstract}
We consider quantum spin systems defined on finite
sets $V$ equipped with a metric. In typical examples, $V$ is a
large, but finite subset of $\mathbb{Z}^d$. For finite range Hamiltonians with
uniformly bounded interaction terms and a unique, gapped ground
state, we demonstrate a locality property of the corresponding
ground state projector. In such systems, this ground state projector
can be approximated by the product of observables with quantifiable
supports. In fact, given any subset $\mathcal{X} \subset
V$ the ground state projector can be approximated by the
product of two projections, one supported on $\mathcal{X}$ and one supported on
$\mathcal{X}^c$, and a bounded observable supported on a boundary region in such a
way that as the boundary region increases, the approximation becomes
better. Such an approximation was useful in proving an area law in
one dimension, and this result corresponds to a multi-dimensional
analogue.
\end{abstract}

\maketitle

\footnotetext[1]{Copyright \copyright\ 2009 by the authors. This
paper may be reproduced, in its entirety, for non-commercial
purposes.}

%
%
%
%

\Section{Introduction}\label{sec:intro}

The intense interest in entangled states for purposes of quantum
information and computation during the past decade has stimulated
new investigations in the structure of ground states of quantum spin
systems. It was soon found that the focus on entanglement, and the
mathematical structures related to this concept, also provide a new
and useful way to investigate important physical properties relevant
for condensed matter physics \cite{anders2006,murg2005}. In particular,
progress was made in our understanding of constructive
approximations of the ground states of quantum spin models, and
about the computational efficiency of numerical algorithms to
compute ground state properties and to simulate the time evolution
of such systems \cite{verstraete2004a,verstraete2004b}.

It is now understood that there is a relationship between the amount
of entanglement in a state and the degree of difficulty or the
amount of resources needed to construct a good approximation of it
\cite{schuch2007}. To make this more precise one has to use a quantitative
measure of entanglement. For pure states the {\em entanglement
entropy} is a very natural such measure for bi-partite entanglement.
For a two-component system with components $A$ and $B$ (e.g., $A$
and $B$ label a partition of the spin variables into two sets), the
entanglement entropy of the system in a pure state $\psi$ with
respect to this partition is defined as the von Neumann entropy of
the density matrix describing the restriction of $\psi$ to subsystem
$A$. Mathematically, this density matrix, $\rho_A$, is obtained as
the partial trace of the orthogonal projection onto the state $\psi$
calculated by tracing out the degrees of freedom in $B$.

It was conjectured that under rather general assumptions, the
entanglement entropy associated with $A$ satisfies an {\em Area
Law}. By Area Law one means that if $A$ is associated with a
subvolume of a fixed system in a ground state then, with a suitable
definition of boundary, the entanglement entropy of $A$ should be
bounded by a constant times the size of the boundary of $A$ (e.g.,
the surface area in the case of a three-dimensional volume).

One way to understand this conjecture is to observe that it holds
almost trivially for the so-called Matrix Product States (MPS), aka
Finitely Correlated States \cite{fannes1992} and their generalizations in
higher dimensions known as PEPS \cite{perez2008}. It is therefore
plausible that to prove the Area Law it is key to have a good handle
on MPS-like approximations of a general ground state. By using
such a strategy Hastings obtained a proof of the Area Law for
one-dimensional systems \cite{Hastarea}.

The goal of this paper is to prove a generalization to arbitrary
dimensions of the approximation result Hastings used in his proof of
the Area Law in one dimension. This is Theorem~\ref{thm:gsapp}
below, which demonstrates that the ground state projector of certain
gapped quantum spin systems can be approximated by a product of
three projections with known supports. We describe this result in
detail in the next section.

%
%
%
%

\Section{A Ground State Approximation Theorem} \label{sec:gsat}

We begin with a brief description of the quantum spin systems that
will be considered in this work. Let $\mathcal{V}$ be a countable, locally finite
set equipped with a metric $d$. In most examples, $\mathcal{V} = \mathbb{Z}^{\nu}$ and $d$ is, for example, the
Euclidean metric. Since our arguments do not make any use of the structure
of the underlying lattice, we present our models in this more general setting.
Most of our analysis applies on finite subsets $V \subset \mathcal{V}$.
More specifically, to each $x\in V$, we will associate a finite-dimensional
Hilbert space $\mathcal{H}_x$; the dimension of $\mathcal{H}_x$ will
be denoted by $n_x$. Set $M_{n_x}$ to be the complex $n_x \times
n_x$ matrices defined over $\mathcal{H}_x$. We denote by
$\mathcal{H}_V = \bigotimes_{x \in V} \mathcal{H}_x$ the
Hilbert space of states over $V$, and similarly $\A_V = \bigotimes_{x \in
V}M_{n_x}$ is the algebra of local observables. For any finite subsets 
$X \subset Y \subset \mathcal{V}$, the local algebra $\A_X$ can be embedded in 
$\A_Y$ by using that any $A \in \A_X$ corresponds to $A \otimes \unity \in \A_Y$.
Thus, the algebra of quasi-local observables, $\A_{\mathcal{V}}$, can be 
defined as the norm-closure of the union of all local algebras; the union 
taken over all finite subsets of $\mathcal{V}$. Moreover, we say that
an observable $A \in \mathcal{A}_{\mathcal{V}}$ is supported in $X \subset \mathcal{V}$,
denoted by $\mbox{supp}(A) =X$, if $X$ is the minimal set for which $A$ can be written as $A = \tilde{A} \otimes \unity$ with
$\tilde{A} \in \A_X$.

A quantum spin model is then defined by its interaction and the
corresponding local Hamiltonians. An interaction is function $\Phi$
from the set of finite subsets of $\mathcal{V}$ into $\A_{\mathcal{V}}$ with the property that
given any finite $V \subset \mathcal{V}$, $\Phi(V) = \Phi(V)^* \in \mathcal{A}_V$.
Given such an interaction, one can associate a family of local Hamiltonians, parametrized by the finite
subsets of $\mathcal{V}$, defined by setting
\begin{equation} \label{eq:locham}
H_V = \sum_{X \subset V} \Phi(X) 
\end{equation}
for finite $V \subset \mathcal{V}$.
Since each local Hamiltonian $H_V$ is self-adjoint, it generates a one 
parameter group of automorphisms, which we will denote by $\tau^V_t$,
that is often called the finite volume dynamics. This dynamics is defined by setting 
\begin{equation}
\tau_t^V(A) = e^{itH_V} A e^{-itH_V}, \quad A \in \mathcal{A}_V.
\end{equation}
When the subset $V$ on which the dynamics is defined has been fixed, we
will often write $\tau_t$ to ease the notation.

Theorem~\ref{thm:gsapp} below holds for certain quantum spin models.
Our first assumption pertains to the underlying set $\mathcal{V}$. We must assume
that there exists a number $\mu_0 >0$ for which
\begin{equation} \label{def:kappa}
\kappa_{\mu_0} = \sup_{x \in \mathcal{V}} \sum_{y \in \mathcal{V}} e^{- \mu_0 d(x,y)} < \infty \, .
\end{equation}
Next, we make the assumptions on the interactions we consider precise.
\newline {\bf A1:} We consider interactions that are of finite range.
Specifically, we assume that there exists a number $R>0$, called the
range of the interactions, for which $\Phi(X) = 0$ if the diameter
of $X$ exceeds $R$. Here, for any finite $X \subset \mathcal{V}$,  the diameter of
$X$ is
\begin{equation}
\mbox{diam} (X) = \max \{ d(x,y) \, : \, x,y\in X\} \,
\end{equation}
where $d$ is the metric on $\mathcal{V}$. 

\noindent {\bf A2:} We will only consider uniformly bounded
interactions. Hence, there is a number $J>0$ such that for any finite $X
\subset \mathcal{V}$,  $\| \Phi(X) \| \leq J$.

\noindent {\bf A3:} Let $\chi_{\Phi}$ be the characteristic function defined
over the set of finite subsets of $\mathcal{V}$, i.e. for any finite
$X \subset \mathcal{V}$,
$\chi_{\Phi}(X) = 1$ if $\Phi(X) \neq 0$, and $\chi_{\Phi}(X) = 0$
otherwise. We assume the following quantity is finite
\begin{equation}\label{eq:finiteint2}
N_{\Phi} = \sup_{x \in \mathcal{V}} \sum_{\stackrel{X \subset \mathcal{V}:}{x \in X}}
|X| \chi_{\Phi}(X) < \infty 
\end{equation}
where the sum above is taken over finite subsets $X \subset \mathcal{V}$ and $|X|$ is the 
cardinality of $X$. Another relevant quantity, which often appears in our estimates, is
\begin{equation}\label{eq:finiteint1}
C_{\Phi} = \sup_{x \in \mathcal{V}} \sum_{\stackrel{X \subset \mathcal{V}:}{x \in X}}
\chi_{\Phi}(X) \leq N_{\Phi} \, .
\end{equation}

\noindent {\bf A4:} We will assume that a given local Hamiltonian $H_V$ has a unique, normalized
ground state which we denote by $\psi_0$. $P_0$ will be the
projection onto this ground state in $\mathcal{H}_V$, and we will label by $\gamma$ the
length of the gap to the first excited state.

Much progress has been made recently in proving locality estimates
for general quantum spin systems, see e.g. \cite{Hastkoma,nasi1}. The following estimate
was proven in \cite{nasi4}.

\begin{theorem}[Lieb-Robinson Bound] \label{thm:lrb}
Let $\mathcal{V}$ be a countable, locally finite set equipped with a metric $d$ for which
(\ref{def:kappa}) holds. Let $\Phi$ be an interaction on $\mathcal{V}$ that satisfies assumptions {\bf
A1}, {\bf A2}, and {\bf A3}. For every $\mu \geq \mu_0$, there exists numbers $c$ and $v$ 
such that given any finite $V \subset \mathcal{V}$ and any observables $A \in \A_X$ and
$B \in \A_Y$ with $X, Y \subset V$ and $X \cap Y = \emptyset$, the bound
\begin{equation}
\left\|  \left[ \tau_t^V(A), B \right] \right\| \leq c \, \| A \| \,
\| B \| \, \min \left[ | \partial_{\Phi} X|, | \partial_{\Phi} Y| \right] e^{- \mu
\left( d(X,Y) - v |t| \right)}
\end{equation}
holds for all $t \in \mathbb{R}$. 
\end{theorem}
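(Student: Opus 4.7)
The plan is to follow the standard Duhamel-iteration derivation of a Lieb--Robinson-type estimate. Fix $B \in \A_Y$ and, for $A \in \A_X$ with $X \cap Y = \emptyset$, set $C_B(A,t) := [\tau_t^V(A), B]$. Since the supports are disjoint, $C_B(A,0) = 0$. Differentiating in $t$ and expanding $[H_V, A] = \sum_{Z \cap X \neq \emptyset}[\Phi(Z), A]$,
\[
\frac{d}{dt} C_B(A,t) \;=\; i\sum_{Z:\, Z \cap X \neq \emptyset,\, \Phi(Z)\neq 0} \bigl[\,[\tau_t^V(\Phi(Z)),\, \tau_t^V(A)],\, B\bigr].
\]
The Jacobi identity rewrites each summand as
\[
[\tau_t^V(\Phi(Z)),\, C_B(A,t)] \;-\; [\tau_t^V(A),\, [\tau_t^V(\Phi(Z)),\, B]].
\]
Summed over $Z$, the first family is an inner derivation applied to $C_B(A,t)$ and generates a norm-preserving flow on the operator algebra; the second family is an inhomogeneous source involving only a single commutator of the form $[\tau_s^V(\Phi(Z)),B]$.

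Variation of parameters therefore yields the fundamental integral inequality
\[
\| [\tau_t^V(A), B] \| \;\leq\; 2\,\|A\| \sum_{Z:\, Z \cap X \neq \emptyset,\, \Phi(Z) \neq 0} \int_0^{|t|} \|[\tau_s^V(\Phi(Z)),\, B]\|\, ds.
\]
Setting $f(X,t) := \sup\{\|[\tau_t^V(A),B]\|/\|A\| : A \in \A_X,\, A \neq 0\}$ and using $\|\Phi(Z)\| \leq J$ from \textbf{A2}, this is a self-consistent inequality. Iterating it $n$ times produces an $n$-fold ordered sum over chains $X \to Z_1 \to Z_2 \to \cdots \to Z_n$ of overlapping sets, each of diameter at most $R$ by \textbf{A1}, terminated at level $n$ by the trivial bound $f(Z_n,0) \leq 2\|B\|\mathbf{1}[Z_n \cap Y \neq \emptyset]$. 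Because each step advances the reached region by at most $R$, only chains whose total reach exceeds $d(X,Y)$ can contribute a nonzero term.

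To extract exponential decay, I would distribute the weight $e^{-\mu d(X,Y)}$ telescopically along the chain via $d(X,Y) \leq d(X,Z_1) + d(Z_1,Z_2) + \cdots + d(Z_n, Y)$ (modulo the finite diameters of the $Z_k$), and use $e^{-\mu d(x,y)}$ as a summability kernel at each step, controlled by \eqref{def:kappa}. Combined with the uniform count \eqref{eq:finiteint1} from \textbf{A3}, the $n$-th iteration contributes a term of order $(2J|t|\,\kappa_\mu C_\Phi)^n/n!$ multiplied by $e^{-\mu d(X,Y)}$. Summing the series produces an exponential $\exp(v\mu|t|)$ with Lieb--Robinson velocity $v$ proportional to $J\kappa_\mu C_\Phi/\mu$, yielding a bound of the form $c\,\|A\|\|B\|\,|\partial_\Phi X|\,e^{-\mu(d(X,Y) - v|t|)}$. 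The prefactor $|\partial_\Phi X|$ is precisely the first-iteration sum over $Z$ meeting $X$ with $\Phi(Z) \neq 0$; rerunning the argument after replacing $\tau_t^V(A)$ by $\tau_{-t}^V(B)$ gives the same estimate with $|\partial_\Phi Y|$ in place of $|\partial_\Phi X|$, and the minimum is the stated bound.

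The main obstacle is the combinatorial bookkeeping in the iteration: one must arrange for exactly one factor of $|\partial_\Phi X|$ (or $|\partial_\Phi Y|$) to appear in the final estimate while the remaining intermediate sums are absorbed into a single geometric series without producing spurious boundary factors at each level. This is handled by isolating the first iteration, which is the unique source of the boundary prefactor, and applying the uniform bounds $C_\Phi$ and $\kappa_\mu$ across every subsequent level so that the contribution from a length-$n$ chain factorizes cleanly as $e^{-\mu d(X,Y)}$ times $(2J|t|)^n$ times a benign combinatorial factor of order $(\kappa_\mu C_\Phi)^n / n!$.
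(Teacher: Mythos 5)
Two remarks, one about context and one about substance.

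\textbf{Context.} This theorem is not proved in the paper at all: the text explicitly says ``The following estimate was proven in \cite{nasi4}'' and then simply states it. So there is no ``paper's own proof'' to match against; the comparison has to be with the argument in that reference, which is the relevant background.

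\textbf{Substance.} Your scheme is the classical Duhamel/Jacobi iteration, and it is set up correctly as far as it goes: the integral inequality
\[
\| [\tau_t^V(A), B] \| \;\leq\; 2\,\|A\| \sum_{Z:\, Z \cap X \neq \emptyset,\, \Phi(Z) \neq 0} \int_0^{|t|} \|[\tau_s^V(\Phi(Z)),\, B]\|\, ds
\]
together with the chain expansion does give exponential decay in $d(X,Y)$ with a linear-in-$|t|$ light cone. The gap is in the prefactor. You assert that ``the prefactor $|\partial_{\Phi}X|$ is precisely the first-iteration sum over $Z$ meeting $X$ with $\Phi(Z)\neq 0$,'' but that sum ranges over \emph{all} interaction sets touching $X$, including those buried deep in the interior, and hence is of size at most $C_{\Phi}\,|X|$, not $C_{\Phi}\,|\partial_{\Phi}X|$. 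Nothing in the later iterations removes the interior sets from that first sum; they still generate chains, which just take more steps to reach $Y$. So the iteration as you have written it yields $\min(|X|,|Y|)$, not $\min(|\partial_{\Phi}X|,|\partial_{\Phi}Y|)$, and in particular gives a bound that deteriorates as the bulk of $X$ grows, which the stated theorem does not.

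The ingredient that produces the boundary prefactor — and this is exactly the device used in \cite{nasi4} and again repeatedly in the present paper, see e.g.\ \eqref{eq:smt} — is to interpolate with the \emph{local} dynamics before iterating. Because $\tau_t^X$ preserves $\A_X$ and $X\cap Y=\emptyset$, one has $[\tau_t^X(A),B]=0$, so
\[
[\tau_t^V(A),B] \;=\; [\tau_t^V(A)-\tau_t^X(A),\,B]
\;=\; \Bigl[\, i\!\int_0^t \tau_s^V\bigl([\,H_V-H_X,\;\tau_{t-s}^X(A)\,]\bigr)\,ds,\; B\Bigr].
\]
The commutator $[H_V-H_X,\tau_{t-s}^X(A)]$ involves only interaction terms $\Phi(Z)$ with $Z\cap X\neq\emptyset$ \emph{and} $Z\cap X^c\neq\emptyset$; these are precisely the ones anchored at $\partial_{\Phi}X$, and there are at most $C_{\Phi}\,|\partial_{\Phi}X|$ of them. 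Iterating from this starting point (or, symmetrically, from $B$ with $\tau_t^Y$) is what delivers the $\min(|\partial_{\Phi}X|,|\partial_{\Phi}Y|)$ prefactor. Without this interpolation step your argument proves a weaker statement than the one asserted.
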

Here for any finite $X \subset \mathcal{V}$, the set 
\begin{equation}
\partial_{\Phi} X = \{ x \in X : \mbox{ there exists } Y \subset \mathcal{V} \mbox{ with } x \in Y, Y \cap \mathcal{V} \setminus X \neq \emptyset, \mbox{ and}, \Phi(Y) \neq 0 \} \, ,
\end{equation}
the number $d(X,Y) = \min \{ d(x,y) : x \in X, y \in Y \}$, and we stress that 
both $c$ and $v$ are independent of $V \subset \mathcal{V}$.

Before we state Theorem~\ref{thm:gsapp}, some further notation is
necessary. Let $\Phi$ be an interaction which satisfies assumptions
{\bf A1} - {\bf A4} above. Again, most of our analysis applies on finite 
sets $V \subset \mathcal{V}$, and often the particular finite set under
consideration is the one whose existence is guaranteed by {\bf A4}.
In general, for finite sets $X \subset V \subset \mathcal{V}$,
we will denote by $\partial X$ the R-boundary of $X$ in $V$, i.e.,
\begin{equation}
\partial X = \{ x \in X : \mbox{there exists } y \in V \setminus X,  \mbox{with } d(x,y) \leq R\}.
\end{equation}
We note that with $X$ fixed, the set  $\partial X$ is independent of $V$, for $V$ sufficiently large, and so we
suppress this in our notation. To state our main result, we fix an additional subset $\X \subset V$.
The following sets, defined relative to this fixed set $\X$ and dependent on a length-scale $\ell >R$, will
also play an important role. Set
\begin{equation} \label{eq:xint}
\X_{\rm int} = \X_{\rm int}^V( \ell) = \left\{ x \in \X :  \mbox{for
all } y \in \partial \X, d(x,y) \geq \ell \right\},
\end{equation}
\begin{equation} \label{eq:xbd}
\X_{\rm bd} = \X_{\rm bd}^V( \ell) = \left\{ x \in V : \mbox{there
exists } y \in
\partial \X, d(x,y) < \ell \right\},
\end{equation}
and
\begin{equation} \label{eq:xext}
\X_{\rm ext} = \X_{\rm ext}^V( \ell) = \left\{ x \in V \setminus \X :
\mbox{for all }y\in
\partial \X, d(x,y) \geq \ell \right\}.
\end{equation}

The sets $\X_{\rm int}$, $\X_{\rm bd}$, and $\X_{\rm ext}$
correspond respectively to the $\ell$-interior, the $\ell$-border, and the
$\ell$-exterior of $\X$ in $V$. We note that for any finite $V \subset \mathcal{V}$ and 
each $\X \subset V$,  the sets $\X_{\rm int}$, $\X_{\rm bd}$, and $\X_{\rm ext}$
are disjoint, and moreover, $V= \X_{\rm int} \cup \X_{\rm bd} \cup
\X_{\rm ext}$. We caution that the notation $\partial \X_{\rm ext}$ here refers to the boundary of this
set as a subset of $V$, as indicated above; not as a subset of $\mathcal{V}$. 
It will also be important for us that there exists a
number $C>0$, independent of both $\ell$ and $V$, for which
\begin{equation} \label{eq:setbd}
\max \left\{ | \partial \X_{\rm int}|, | \X_{\rm bd}|, | \partial
\X_{\rm ext}| \right\} \, \leq \, C \ell | \partial \X| \, .
\end{equation}
The above inequality constitutes the main structural assumption on the set $\mathcal{V}$ and
the reference set $\X \subset V$. We may now state our main result.

\begin{theorem} \label{thm:gsapp}
Let $\mathcal{V}$ be a countable, locally finite set equipped with a metric $d$ for which
(\ref{def:kappa}) holds, and moreover, let $\Phi$ be an interaction on $\mathcal{V}$ 
which satisfies assumptions {\bf A1} - {\bf A4}, take
$V \subset \mathcal{V}$ to be the finite set from {\bf A4}, 
and suppose $\X \subset V$ satisfies (\ref{eq:setbd}).
For any $\ell >R$, there exists two projections $P_{\X} \in \A_{\X}$ and 
$P_{\X^c}\in \A_{V \setminus \X}$ and an observable 
$P_{\X_{\rm bd}} \in \A_{\X_{\rm bd}(3 \ell)}$ with $\| P_{\X_{\rm bd}}  \| \leq 1$
such that 
\begin{equation} \label{eq:p0bd}
\| P_{\X_{\rm bd}} P_{\X} P_{\X^c} - P_0 \| \leq K
C_{\Phi}N_{\Phi}|
\partial \X |^2 \ell^{7/2} e^{- \ell/ 2\xi}.
\end{equation}
Here $\xi$ is defined by
\begin{equation}
\frac{2}{ \xi} = \frac{ \mu \gamma^2}{\mu^2 v^2 + \gamma^2} \, ,
\end{equation}
for $\mu \geq 2 \mu_0$.
\end{theorem}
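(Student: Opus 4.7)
The plan is to combine a standard filter-based construction of an approximate ground state projector with a Lieb--Robinson localization of the dynamics, and then to extract honest local projections from the resulting approximately-local product.

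First, I would use the gap assumption \textbf{A4} to write $P_0$ as a filter of the dynamics. Fix a smooth, even, rapidly decaying function $f_\sigma$ whose Fourier transform $\hat f_\sigma$ satisfies $\hat f_\sigma(E_0)=1$ and decays essentially like $e^{-c(\omega-E_0)^2/\sigma^2}$, so that $\hat f_\sigma$ is $O(e^{-c\gamma^2/\sigma^2})$ on the spectrum of $H_V$ above the gap. Setting $\tilde P:=\int f_\sigma(t)\,e^{-itH_V}\,dt$ gives $\|\tilde P-P_0\|\leq C e^{-c\gamma^2/\sigma^2}$, where $\sigma$ will be a tunable parameter.

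Next, I would decompose $H_V=H_\X+H_{\X^c}+V_\partial$, with $H_\X=\sum_{X\subset\X}\Phi(X)$, $H_{\X^c}=\sum_{X\subset V\setminus\X}\Phi(X)$, and $V_\partial$ the sum of the remaining (boundary-crossing) $\Phi(X)$, which is supported in an $R$-neighborhood of $\partial\X$. Since $[H_\X,H_{\X^c}]=0$, the interaction picture gives
\begin{equation*}
e^{-itH_V}=e^{-itH_\X}\,e^{-itH_{\X^c}}\,W(t),\qquad W(t)=T\exp\!\left(-i\int_0^t e^{is(H_\X+H_{\X^c})}V_\partial e^{-is(H_\X+H_{\X^c})}\,ds\right).
\end{equation*}
Applying Theorem~\ref{thm:lrb} to the Hamiltonian $H_\X+H_{\X^c}$ (which inherits the same interaction bounds as $H_V$), each evolved term in $V_\partial$ is approximated to accuracy $e^{-\mu(\ell-v|s|)}$ by an operator supported within distance $\ell$ of $\partial\X$; iterating the Dyson series and using assumption (\ref{eq:setbd}) to count boundary terms, $W(t)$ is approximated by some $\tilde W(t)\in\A_{\X_{\rm bd}(3\ell)}$ with an error of order $C_\Phi N_\Phi|\partial\X|\,e^{-\mu(\ell-v|t|)}$.

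Substituting into $\tilde P$ yields $\tilde P\approx\int f_\sigma(t)\,e^{-itH_\X}\,e^{-itH_{\X^c}}\,\tilde W(t)\,dt$, in which the $H_\X$- and $H_{\X^c}$-factors commute and have separate supports. The quantitative balancing of the two errors (filter tail versus Lieb--Robinson leakage) is a short variational calculation: the dominant rate turns out to be minimized when $\sigma\sim\gamma^2/\sqrt{\mu^2v^2+\gamma^2}$, producing precisely the exponent $\ell/(2\xi)$ with $2/\xi=\mu\gamma^2/(\mu^2v^2+\gamma^2)$; the polynomial prefactor $\ell^{7/2}|\partial\X|^2$ arises from summing the Lieb--Robinson constants $|\partial_\Phi X|\leq|\X_{\rm bd}|\leq C\ell|\partial\X|$ over the Dyson series and from Gaussian moments of $f_\sigma$.

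The main obstacle I expect is the last step: converting the approximation into an honest product of two \emph{projections} times a norm-one boundary observable. The naive factors $\int f_\sigma(t)\,e^{-itH_\X}\,dt$ and $\int f_\sigma(t)\,e^{-itH_{\X^c}}\,dt$ are not projections, and $H_\X,H_{\X^c}$ need not even be gapped. To handle this, I would take $P_\X$ to be the spectral projection onto the range of the local approximate projector after a suitable positivity cutoff: explicitly, form $Q_\X:=\int f_\sigma(t)e^{-itH_\X}dt$, consider $Q_\X Q_\X^*\in\A_\X$, and let $P_\X$ be its spectral projection onto eigenvalues above a small threshold; define $P_{\X^c}$ analogously. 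The difference $Q_\X-Q_\X P_\X$ and its $\X^c$ counterpart are exponentially small in $\gamma/\sigma$ on the relevant subspace, so they can be absorbed into the boundary factor. The boundary observable $P_{\X_{\rm bd}}\in\A_{\X_{\rm bd}(3\ell)}$ is then the ``residual'' operator obtained from $\tilde W(t)$ and the polar parts of $Q_\X,Q_{\X^c}$, normalized so that $\|P_{\X_{\rm bd}}\|\le 1$ (possible because $P_0$ and $P_\X P_{\X^c}$ are contractions and the ratio $\tilde P\,(P_\X P_{\X^c})^{-1}$ restricted to the range of $P_\X P_{\X^c}$ is close to a rank-one projection). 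Collecting all the errors and invoking (\ref{eq:setbd}) to bound the combinatorial factors then gives (\ref{eq:p0bd}).
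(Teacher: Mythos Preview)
Your filter-plus-interaction-picture outline is a reasonable variant of the paper's scheme for producing an approximately local object near $P_0$, but the final step --- extracting genuine projections $P_{\X}$, $P_{\X^c}$ and a boundary contraction --- has a real gap, and this is precisely where the paper's argument differs from yours in a substantive way.

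The difficulty is twofold. First, the expression $\int f_\sigma(t)\,e^{-itH_\X}e^{-itH_{\X^c}}\tilde W(t)\,dt$ is a single $t$-integral, not a product; you never explain how to separate it into a factor in $\A_\X$, a factor in $\A_{\X^c}$, and a boundary factor. Second, your proposed $P_\X$ is the spectral projection of $Q_\X Q_\X^*=|\hat f_\sigma(H_\X)|^2$ above a threshold, but nothing in the hypotheses controls the spectrum of $H_\X$: it need not be gapped, and $\hat f_\sigma(H_\X)$ need not be close to a projection on any identifiable subspace. In particular there is no mechanism to show $\|(1-P_\X)\psi_0\|$ is small, which is what you need to insert $P_\X P_{\X^c}$ without loss. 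The phrases ``absorbed into the boundary factor'' and ``$\tilde P\,(P_\X P_{\X^c})^{-1}$ restricted to the range of $P_\X P_{\X^c}$ is close to a rank-one projection'' are where the argument would have to do real work, and as stated they do not.

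The paper solves both problems with one idea you are missing. Instead of working with $H_\X$, $H_{\X^c}$ directly, it builds \emph{local self-adjoint operators} $M_\X\in\A_\X$ and $M_{\X^c}\in\A_{\X^c}$ (Gaussian time-smearings of the interior/exterior Hamiltonians under suitably localized dynamics) with two properties: $M_\X+M_{\X_{\rm bd}}+M_{\X^c}$ is norm-close to $H_V$, and, crucially, $\|M_\X\psi_0\|$ and $\|M_{\X^c}\psi_0\|$ are exponentially small (this uses the gap and Proposition~\ref{prop:bd}). The projections $P_\X$, $P_{\X^c}$ are then the spectral projections of $M_\X$, $M_{\X^c}$ onto eigenvalues below a threshold $\delta$; the small-norm property immediately gives $\|(1-P_\X)\psi_0\|\le\delta^{-1}\|M_\X\psi_0\|$. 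The factorization of the $t$-integral is then achieved by the elementary observation that $\|(e^{iM_\X t}-1)P_\X\|\le\delta|t|$, so on the range of $P_\X P_{\X^c}$ one can replace $e^{i(M_\X+M_{\X^c})t}$ by the identity and what remains, $\int e^{i(M_\X+M_{\X_{\rm bd}}+M_{\X^c})t}e^{-i(M_\X+M_{\X^c})t}e^{-\alpha t^2}dt$, is then localized to $\X_{\rm bd}(3\ell)$ by a Lieb--Robinson/partial-trace argument. Your outline lacks the analogue of ``$M_\X\psi_0$ is small,'' and without it the projection step does not go through.
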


In many situations, it is useful to have a non-negative approximation of the ground state projector.
For this reason, we also state the following corollary.
\begin{corollary} \label{cor:gsapp} 
Under the assumptions of Theorem~\ref{thm:gsapp}, the estimate
\begin{equation}
\| P_{\X^c} P_{\X} P^*_{\X_{\rm bd}} P_{\X_{\rm bd}} P_{\X} P_{\X^c} - P_0 \| \leq K
C_{\Phi}N_{\Phi}|
\partial \X |^2 \ell^{7/2} e^{- \ell/ 2\xi}
\end{equation}
also holds.
\end{corollary}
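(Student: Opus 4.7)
The plan is to bootstrap directly from Theorem~\ref{thm:gsapp} by exploiting the fact that $P_0$ is itself a projection, so $P_0 = P_0^* P_0$, and the quantity in the corollary is of the form $Q^*Q$ where $Q = P_{\X_{\rm bd}} P_{\X} P_{\X^c}$ is exactly the operator approximating $P_0$ in Theorem~\ref{thm:gsapp}.

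First I would record the elementary algebraic identity
\begin{equation}
Q^*Q - P_0 \;=\; Q^*(Q - P_0) \;+\; (Q^* - P_0)P_0,
\end{equation}
which uses $P_0^* = P_0$ and $P_0^2 = P_0$. Taking norms and applying the triangle inequality along with sub-multiplicativity gives
\begin{equation}
\| Q^*Q - P_0 \| \;\leq\; \| Q \| \, \| Q - P_0 \| \;+\; \| Q - P_0 \| \, \| P_0 \|.
\end{equation}

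Second, I would bound the two prefactors. Since $P_{\X}$ and $P_{\X^c}$ are projections and $\| P_{\X_{\rm bd}} \| \leq 1$ by hypothesis, we have $\| Q \| \leq 1$, and of course $\| P_0 \| = 1$. Hence
\begin{equation}
\| Q^*Q - P_0 \| \;\leq\; 2 \, \| Q - P_0 \|.
\end{equation}

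Third, I would invoke Theorem~\ref{thm:gsapp} to bound $\| Q - P_0 \|$ by $K C_{\Phi}N_{\Phi}|\partial \X|^2 \ell^{7/2} e^{-\ell/2\xi}$, absorbing the factor of $2$ into the constant $K$. This yields the stated inequality with no extra work. There is no real obstacle here: the corollary is essentially a formal consequence of the structural observation that the approximant $Q$ in Theorem~\ref{thm:gsapp} is a contraction and $P_0$ is a projection, so one automatically obtains the same rate for the self-adjoint, positive approximant $Q^*Q$.
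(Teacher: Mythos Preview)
Your proof is correct and matches the paper's approach exactly: the paper simply remarks that since all the observables involved have norm bounded by $1$, the corollary is an immediate consequence of Theorem~\ref{thm:gsapp}. You have merely written out the one-line algebraic identity underlying that remark.
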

Since all the observables involved have norm bounded by 1, Corollary~\ref{cor:gsapp} is an immediate
consequence of Theorem~\ref{thm:gsapp}.

With the methods discussed in \cite{nasi4}, Theorem~\ref{thm:gsapp} can 
easily be generalized to other types of interactions. For example, the finite range condition {\bf A1}
can be replaced with an exponentially decaying analogue. Moreover, the uniform bound on 
the local interactions terms, i.e. {\bf A2}, can also be lifted. One could also consider systems
where the underlying sets satisfied a bound of the form (\ref{eq:setbd}) with a larger power of
$\ell$. This would increase the power of $\ell$ in the statement of Theorem~\ref{thm:gsapp}, but not
effect the exponential decay. We give the argument in the context
described above for convenience of presentation. It would be very interesting to see that the
methods developed here can be used to prove an area law, similar to Hastings' result \cite{Hastarea}, in
dimensions greater than 1, but this is beyond the scope of the present work.

%
%
%
%

\Section{Basic Set-Up} \label{sec:setup}

Before beginning the proof of this theorem, we will introduce the
basic objects to be analyzed. Let $V \subset \mathcal{V}$ be the 
finite set described in {\bf A4}, take $\ell >R$, and 
fix $\X \subset V$ satisfying (\ref{eq:setbd}). Note that  $H_V$, as defined in (\ref{eq:locham}), can be
written as a sum of three local Hamiltonians
\begin{equation} \label{eq:hamdecomp}
H_V = H_{\X_{\rm int}}^b + H_{\X_{\rm bd}} + H_{\X_{\rm ext}}^b,
\end{equation}
where $H_{\X_{\rm bd}}$ is as in (\ref{eq:locham}) with $V = \X_{\rm bd}$, 
and for $Z \in \{ \X_{\rm int}, \X_{\rm ext} \}$,
\begin{equation} \label{eq:ham+bd}
H_Z^b = \sum_{\stackrel{X \subset V:}{X \cap Z \neq \emptyset}}
\Phi(X) \, ,
\end{equation}
corresponds to local Hamiltonians that include boundary terms.
Without loss of generality, we may subtract a constant, namely
$\langle \psi_0, H_V \psi_0 \rangle$, and thereby assume that the
ground state energy of each of these Hamiltonians is zero. In this
case, each term in (\ref{eq:hamdecomp}) has zero ground state
expectation. It is not clear, however, that each of these terms,
when applied to the ground state, have a quantifiably small norm. To
achieve small norms, we introduce non-local versions of these
observables.

Generally, given any self-adjoint Hamiltonian $H$ on
$\mathcal{H}_V$, a local observable $A\in \mathcal{A}_V$, and
$\alpha >0$, we may define a new observable
\begin{equation} \label{eq:smear}
(A)_{\alpha} = \sqrt{ \frac{ \alpha}{ \pi}} \int_{- \infty}^{\infty}
\tau_t(A) e^{- \alpha t^2} dt,
\end{equation}
where here $\tau_t(A) = e^{itH}Ae^{-itH}$. It is easy to see that
for every $\alpha >0$, $\| (A)_{\alpha} \| \leq \| A\|$.

In the present context, we will define the $( \cdot )_{\alpha}$
operation with respect to the Hamiltonian $H_V$ and its
corresponding dynamics $\tau_t^V$. Observe that these non-local
observables still sum to the total Hamiltonian, i.e.,
\begin{equation}\label{eq:nonlocH}
H_V = ( H_{\X_{\rm int}}^b)_{\alpha} \, + \, ( H_{\X_{\rm
bd}})_{\alpha} \, + \, ( H_{\X_{\rm ext}}^b)_{\alpha}.
\end{equation}
They also have zero ground state expectation, i.e.,
\begin{equation} \label{eq:0gse}
\langle \psi_0, ( H_{\X_{\rm int}}^b)_{\alpha} \psi_0 \rangle \, =
\, \langle \psi_0, ( H_{\X_{\rm bd}})_{\alpha} \psi_0 \rangle \, =
\, \langle \psi_0, ( H_{\X_{\rm ext}}^b )_{\alpha} \psi_0 \rangle \,
= \, 0.
\end{equation}

We will prove, in Proposition~\ref{prop:bd}, that each of these
smeared-out Hamiltonians, when applied to the ground state, has a
norm bounded by an exponentially decaying quantity. The cost of this
{\it small norm} estimate is a loss of locality, which is expressed
in terms of the support of these Hamiltonians.

Equipped with the Lieb-Robinson bound in Theorem \ref{thm:lrb}, we
begin the next step in the proof of Theorem \ref{thm:gsapp}. Here we
approximate each of $( H_{\X_{\rm int}}^b)_{\alpha}$, $( H_{\X_{\rm
bd}})_{\alpha}$, and $( H_{\X_{\rm ext}}^b)_{\alpha}$ with local
observables $M_{\X}(\alpha), M_{\X_{\rm bd}(2 \ell)}(\alpha)$ and
$M_{\X^c}(\alpha)$ respectively. These observables will be
constructed in such a way that they not only have a quantifiable
support but also a small vector norm, when applied to the ground
state.

To make these approximating Hamiltonians explicit, we introduce
local evolutions. Define three different dynamics, each acting on
$\A_V$, by setting
\begin{equation} \label{eq:locev1}
\begin{array}{ccc}
\vspace{.2cm}
\tau_t^{\X}(A) &=& e^{itH_{\X}} A e^{-itH_{\X}} \: , \\
\vspace{.2cm}
\tau_t^{\X_{\rm bd}(2 \ell)}(A) &=& e^{itH_{\X_{\rm bd}(2 \ell)}} A e^{-itH_{\X_{\rm bd}(2 \ell)}} \: ,\\
\tau_t^{\X^c}(A) &=& e^{itH_{V \setminus \X}} A e^{-itH_{V \setminus
\X}} \: ,
\end{array}
\end{equation}
for any local observable $A \in \A_V$ and any $t \in \mathbb{R}$.
The set $\X_{\rm bd}(2 \ell)$ is as defined in (\ref{eq:xbd}) with
the length scale doubled, and the local Hamiltonians used in the
evolutions above are as in (\ref{eq:locham}). The approximating
Hamiltonians are given by
\begin{equation} \label{eq:locham1}
\begin{array}{ccc}
\vspace{.2cm} M_{\X}(\alpha) & = &  \sqrt{ \frac{ \alpha}{ \pi}}
\int_{-
\infty}^{\infty} \tau_t^{\X} \left( H_{\X_{\rm int}}^b \right) e^{- \alpha t^2} dt \: ,\\
\vspace{.2cm} M_{\X_{\rm bd}(2 \ell)}(\alpha) & = &  \sqrt{ \frac{
\alpha}{ \pi}} \int_{-
\infty}^{\infty} \tau_t^{\X_{\rm bd}(2  \ell)} \left( H_{\X_{\rm bd}} \right) e^{- \alpha t^2} dt \: ,\\
M_{\X^c}(\alpha) & = &  \sqrt{ \frac{ \alpha}{ \pi}} \int_{-
\infty}^{\infty} \tau_t^{\X^c} \left( H_{\X_{\rm ext}}^b \right)
e^{- \alpha t^2} dt \: .
\end{array}
\end{equation}
The observables defined above, i.e. in (\ref{eq:locham1}), have been chosen
such that $\mbox{supp}(M_{\X}(\alpha)) =  \X$,
$\mbox{supp}(M_{\X_{\rm bd}(2 \ell)}(\alpha)) = \X_{\rm bd}(2
\ell)$, and $\mbox{supp}(M_{\X^c}(\alpha)) = V \setminus \X$. 
To ease notation, we will often write
$M_{\X_{\rm bd}}(\alpha) = M_{\X_{\rm bd}(2 \ell)}(\alpha)$ and
suppress the exact size of the support.

The following technical estimate summarizes the results mentioned above.
\begin{lemma}\label{lem:bd2}
Let $\mathcal{V}$ be a countable, locally finite set equipped with a metric $d$ for which
(\ref{def:kappa}) holds, and let $\Phi$ be an interaction on $\mathcal{V}$ 
which satisfies assumptions {\bf A1} - {\bf A4}. For the finite 
set $V \subset \mathcal{V}$ from {\bf A4} and any set $\X \subset V$
satisfying (\ref{eq:setbd}) for all $\ell >R$, the estimate 
\begin{equation} \label{eq:totbd}
\left\| H_V \, - \, \left(M_{\X}( \alpha) + M_{\X_{\rm bd}}(\alpha)
+M_{\X^c}( \alpha) \right) \right\| \, \leq \, K | \partial \X|
\ell^{3/2} e^{- \frac{\ell}{\xi}},
\end{equation}
holds along the parametrization $2 \alpha \epsilon \ell = \mu v^2$ where $\mu \geq 2 \mu_0$. The numbers $\xi$ and $\epsilon$
are defined in terms of the gap $\gamma$ and the quantities $\mu$ and $v$ from the 
Lieb-Robinson estimates as
\begin{equation}
0 < \frac{2}{\xi} = (1- \epsilon) \mu = \frac{ \gamma^2}{ \mu^2 v^2 + \gamma^2}  \mu \, .
\end{equation}
Along the same parametrization, the bound
\begin{equation} \label{eq:Monpsi}
\max \left\{ \left\| M_{\X}( \alpha) \psi_0 \right\|, \left\|
M_{\X_{\rm bd}}( \alpha) \psi_0 \right\|, \left\| M_{\X^c} ( \alpha)
\psi_0 \right\| \right\}  \, \leq \, K | \partial \X| \ell^{3/2}
e^{- \frac{\ell}{\xi}}.
\end{equation}
also holds.
\end{lemma}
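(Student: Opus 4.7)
The lemma combines two ingredients: a spectral-gap estimate via Gaussian smoothing, and the Lieb--Robinson estimate of Theorem~\ref{thm:lrb}. I would carry out the proof in three main stages.

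\emph{Stage 1 (gap-induced decay on $\psi_0$).}
After shifting so that $H_V\psi_0=0$, redistribute the resulting overall constant among the three terms of (\ref{eq:hamdecomp}) so that each of $H^b_{\X_{\rm int}}, H_{\X_{\rm bd}}, H^b_{\X_{\rm ext}}$ has zero ground-state expectation (as in (\ref{eq:0gse})). For any such self-adjoint $A\in\A_V$, the identity $H_V\psi_0=0$ gives $(A)_\alpha\psi_0 = e^{-H_V^2/(4\alpha)}A\psi_0$; expanding $A\psi_0$ in the eigenbasis of $H_V$ and using that $A\psi_0\perp\psi_0$ and all other eigenvalues exceed $\gamma$, one obtains $\|(A)_\alpha\psi_0\| \leq e^{-\gamma^2/(4\alpha)}\|A\psi_0\|$. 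For $A=H_{\X_{\rm bd}}$, combining Assumptions \textbf{A2}, \textbf{A3} and (\ref{eq:setbd}) yields $\|A\|\leq JC_\Phi C\,\ell\,|\partial\X|$, so $\|(H_{\X_{\rm bd}})_\alpha\psi_0\|\leq K\ell|\partial\X|e^{-\gamma^2/(4\alpha)}$. For the bulk cases $Z\in\{\X_{\rm int},\X_{\rm ext}\}$, smearing the identity $H_V\psi_0=0$ gives $(H^b_{\X_{\rm int}})_\alpha\psi_0 = -(H_{\X_{\rm bd}})_\alpha\psi_0 - (H^b_{\X_{\rm ext}})_\alpha\psi_0$, reducing the two bulk contributions to the boundary case.

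\emph{Stage 2 (Lieb--Robinson replacement).}
For each $Z$, express $M_Z(\alpha)-(H^b_Z)_\alpha$ as a Gaussian integral of $\tau_t^Z(H^b_Z)-\tau_t^V(H^b_Z)$, and apply the Duhamel identity $\tau_t^V(A)-\tau_t^Z(A) = i\int_0^t\tau_s^V([H_V-H_Z,\tau_{t-s}^Z(A)])\,ds$ to reduce to a commutator. Expanding both $H_V-H_Z$ and $H^b_Z$ into interaction terms yields pairs whose supports are separated by $\ell-2R$ in each of the three cases (using $\ell>R$ and the definitions of $\X_{\rm int}, \X_{\rm bd}(2\ell), \X_{\rm ext}$). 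Term-by-term application of Theorem~\ref{thm:lrb} gives per-pair decay $e^{-\mu(d(X,Y)-v|s|)}$, and pair-summing via the halving trick $e^{-\mu d(X,Y)}\leq e^{-\mu\ell/2}e^{-\mu d(X,Y)/2}$ combined with $\kappa_{\mu/2}<\infty$ and the structural estimate (\ref{eq:setbd}) produces a prefactor $O(|\partial\X|)$. Integrating over $s$ and then against $\sqrt{\alpha/\pi}e^{-\alpha t^2}$ using the Cauchy--Schwarz bound $\mu v|t|\leq\alpha\epsilon t^2+\mu^2v^2/(4\alpha\epsilon)$ produces $\|M_Z(\alpha)-(H^b_Z)_\alpha\|\leq K|\partial\X|\ell^{1/2}e^{-\mu\ell/2+\mu^2v^2/(4\alpha\epsilon)}(1-\epsilon)^{-1/2}$.

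\emph{Stage 3 (parametrization and assembly).}
Take $2\alpha\epsilon\ell=\mu v^2$ and $\epsilon=\mu^2v^2/(\mu^2v^2+\gamma^2)$, equivalently $\alpha=(\mu^2v^2+\gamma^2)/(2\mu\ell)$. Direct computation gives $\mu^2v^2/(4\alpha\epsilon)=\mu\ell/2$ and $\gamma^2/(4\alpha)=(1-\epsilon)\mu\ell/2=\ell/\xi$, so both the LR error of Stage~2 and the gap error of Stage~1 are bounded by $K|\partial\X|\ell^{3/2}e^{-\ell/\xi}$ (the $\ell$ factor from $\|H_{\X_{\rm bd}}\|=O(\ell|\partial\X|)$ and the residual $\sqrt{\ell}$ from the Gaussian moment $1/\sqrt{\alpha}\sim\sqrt{\ell}$). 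The triangle inequality applied to $M_Z(\alpha)\psi_0=(H^b_Z)_\alpha\psi_0+(M_Z(\alpha)-(H^b_Z)_\alpha)\psi_0$ then yields (\ref{eq:Monpsi}); summing the three LR replacement errors via the decomposition (\ref{eq:nonlocH}) yields (\ref{eq:totbd}).

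\emph{Main obstacle.}
The most delicate point is the bulk estimate in Stage~1: the operator norms $\|H^b_{\X_{\rm int}}\|$ and $\|H^b_{\X_{\rm ext}}\|$ individually grow with the volume of $\X$ and $\X^c$, so a naive application of $\|(H^b_Z)_\alpha\psi_0\|\leq e^{-\gamma^2/(4\alpha)}\|H^b_Z\|$ fails to give a surface prefactor $|\partial\X|$. Reducing the bulk to the boundary case via the smeared identity leaves a symmetric bulk term that does not close on its own, since the sum relation only yields $\bigl|\,\|(H^b_{\X_{\rm int}})_\alpha\psi_0\| - \|(H^b_{\X_{\rm ext}})_\alpha\psi_0\|\,\bigr|\leq\|(H_{\X_{\rm bd}})_\alpha\psi_0\|$. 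Closing the argument requires an additional ingredient exploiting the separation $\geq 2\ell$ between the supports of $H^b_{\X_{\rm int}}$ and $H^b_{\X_{\rm ext}}$, either through exponential clustering of ground-state correlations or through a further Lieb--Robinson commutator estimate tuned to this geometry.
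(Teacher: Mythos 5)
Your proposal correctly identifies most of the architecture of the argument (decompose into boundary and bulk terms, smear, approximate with local evolutions, apply Lieb--Robinson, optimize the Gaussian parameter), and you are right that the dangerous step is the bulk terms in Stage 1. However, the obstacle you flag is a genuine gap that you do not close, and the resolution in the paper is different from the two alternatives you suggest. The key trick (Proposition~\ref{prop:bd}) is not exponential clustering nor a further Lieb--Robinson estimate; it is to insert one more power of the gap before passing to operator norms. Since $\langle \psi_0, (A)_\alpha \psi_0\rangle = 0$ and $H_V\psi_0 = 0$, the spectral theorem gives
\begin{equation}
\|(A)_\alpha \psi_0\| \le \frac{1}{\gamma}\|H_V(A)_\alpha \psi_0\| = \frac{1}{\gamma}\|([H_V,A])_\alpha \psi_0\| \le \frac{e^{-\gamma^2/4\alpha}}{\gamma}\,\|[H_V,A]\| \, ,
\end{equation}
where the middle equality again uses $H_V\psi_0 = 0$ to convert $H_V(A)_\alpha$ into $([H_V,A])_\alpha$ acting on $\psi_0$. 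The crucial point (Proposition~\ref{prop:commutator}) is then that $[H_V,A]$ is surface-sized for \emph{all three} pieces $A\in\{H^b_{\X_{\rm int}}, H_{\X_{\rm bd}}, H^b_{\X_{\rm ext}}\}$, because the bulk pieces commute with each other and with themselves, so $[H_V, H^b_{\X_{\rm int}}] = [H_{\X_{\rm bd}}, H^b_{\X_{\rm int}}]$ is supported near $\partial\X_{\rm int}$ and has norm $\lesssim J^2 C_\Phi N_\Phi |\partial\X|\ell$ via (\ref{eq:setbd}). This bypasses the difference-of-norms impasse you describe and produces the surface prefactor directly, without ever invoking a bound like $\|A\psi_0\|\le\|A\|$.

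There is also a secondary slip in Stage~2. With the parametrization $2\alpha\epsilon\ell = \mu v^2$, your constant $\mu^2v^2/(4\alpha\epsilon)$ equals $\mu\ell/2$, which exactly cancels the separation decay $e^{-\mu\ell/2}$ coming from $d(X,Y)\ge \ell - 2R$, leaving no net decay in the Lieb--Robinson replacement error. The paper instead controls $\int e^{\mu v|t|}e^{-\alpha t^2}\,dt$ by completing the square exactly (with the small-$|t|$/large-$|t|$ split handled by the auxiliary cutoff $T$), which produces the sharper constant $\mu^2v^2/(4\alpha) = \epsilon\mu\ell/2$ and hence the net decay $e^{-(1-\epsilon)\mu\ell/2} = e^{-\ell/\xi}$. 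Your weaker Cauchy--Schwarz split loses precisely the factor of $\epsilon$ needed for decay; replacing it with completing the square (or, equivalently, reintroducing the $T$-cutoff and optimizing $T$ so that $2\alpha T^2 = (1-\epsilon)\mu\ell$) repairs this.
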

These estimates play an important role in the proof of Theorem~\ref{thm:gsapp}.
We prove this lemma in Section~\ref{sec:test}. For now, we use it to prove
Theorem~\ref{thm:gsapp}.

%
%
%
%

 \Section{Proof of Theorem~\ref{thm:gsapp}}
The first step in verifying the bound claimed in (\ref{eq:p0bd}) is
to find an explicit approximation to the ground state projector
$P_0$. With this in mind, define
\begin{equation}
\P_{\alpha} \, = \, \sqrt{ \frac{\alpha}{\pi}} \int_{-
\infty}^{\infty} e^{i H_V t} e^{- \alpha t^2} \, dt,
\end{equation}
for any $\alpha >0$. Clearly, $\P_{\alpha} \in \A_V$ and for any vectors $f,g \in \mathcal{H}_V$, the
spectral theorem implies
\begin{eqnarray}
\langle f, ( \P_{\alpha} - P_0) g \rangle & = &  \sqrt{
\frac{\alpha}{\pi}} \int_{- \infty}^{\infty} e^{- \alpha t^2} \,
\int_0^{\infty} e^{i \lambda t} d \langle f, E_{\lambda} g \rangle
dt \, - \, \langle f, P_0 g \rangle \nonumber \\ & = &
\int_{\gamma}^{\infty} \sqrt{ \frac{\alpha}{\pi}} \int_{-
\infty}^{\infty} e^{i \lambda t} e^{- \alpha t^2} \, dt \, d \langle
f, E_{\lambda} g \rangle
 \nonumber \\ & = & \int_{\gamma}^{\infty}  e^{- \frac{\lambda^2}{4 \alpha}} \, d \langle f, E_{\lambda} g \rangle,
\end{eqnarray}
where we have denoted by $E_{\lambda}$ the spectral projection
corresponding to $H_V$. This readily yields that
\begin{equation} \label{eq:appbd}
\left\| \P_{\alpha} - P_0 \right\| \, \leq \, e^{- \frac{
\gamma^2}{4 \alpha}},
\end{equation}
where $\gamma$ is the gap of the Hamiltonian $H_V$.

Using the operators introduced in \eqref{eq:locham1}, we define an
analogous ground state approximate by setting
\begin{equation}
\tilde{\P}_{\alpha} \, = \,  \sqrt{ \frac{ \alpha}{ \pi}}
\int_{-\infty}^{\infty} e^{i(M_{\X}+M_{\X_{\rm bd}}+M_{\X^c})t} e^{-
\alpha t^2} \, dt.
\end{equation}
Here we have dropped the dependence of $M_{\X}$, $M_{\X_{\rm bd}}$,
and $M_{\X^c}$ on $\alpha$. Clearly,
\begin{equation} \label{eq:appbd2}
\left\| \tilde{\P}_{\alpha} - P_0 \right\| \, \leq  \, \left\|
\tilde{\P}_{\alpha} - \P_{\alpha} \right\| \, + \, \left\|
\P_{\alpha} - P_0 \right\|.
\end{equation}
The final term above we have bounded in (\ref{eq:appbd}). To bound
the first term in (\ref{eq:appbd2}), we introduce the function
\begin{equation}
F_t( \lambda) \, = \, e^{i \lambda (M_{\X}+M_{\X_{\rm
bd}}+M_{\X^c})t} \, e^{i(1- \lambda) H_V t}.
\end{equation}
One easily calculates that
\begin{equation}
F_t'( \lambda) \, = \, - \, i \, t \,  e^{i \lambda
(M_{\X}+M_{\X_{\rm bd}}+M_{\X^c})t} \left\{ H_V \, - \,
(M_{\X}+M_{\X_{\rm bd}}+M_{\X^c}) \right\} e^{i(1- \lambda)H_V t}.
\end{equation}
Using Lemma~\ref{lem:bd2}, we conclude that, if $2 \alpha \epsilon \ell = \mu v^2$, then
\begin{eqnarray} \label{eq:pbd}
\left\| \tilde{\P}_{\alpha} - \P_{\alpha} \right\| & \leq &  \sqrt{ \frac{ \alpha}{ \pi}} \int_{-\infty}^{\infty} \left\| F_t(1) - F_t(0) \right\| \, e^{- \alpha t^2} \, dt \nonumber \\
& \leq & \sqrt{ \frac{ \alpha}{ \pi}} \int_{-\infty}^{\infty} \left\|  H_V \, - \, (M_{\X}+M_{\X_{\rm bd}}+M_{\X^c}) \right\| \, |t| \, e^{- \alpha t^2} \, dt \nonumber \\
& \leq & K | \partial \X| \ell^2 e^{- \frac{\ell}{\xi}} \, .
\end{eqnarray}
In the bound above, and for the rest of this section, the number $K$ which appears in our estimates will depend
on the parameters of the quantum spin models, but not on $\ell$. It may change from line to line, but we do not
indicate this in our notation. 

Inserting both \eqref{eq:appbd} and \eqref{eq:pbd} into \eqref{eq:appbd2} yields
\begin{equation} \label{eq:appbd3}
\left\| \tilde{\P}_{\alpha} - P_0 \right\| \, \leq  \, K  |
\partial \X| \ell^2 e^{- \frac{\ell}{\xi}},
\end{equation}
again along the parametrization $2 \alpha \epsilon \ell = \mu v^2$.

 Set
\begin{equation}\label{eq:delta}
\delta = \ell^{3/2} e^{- \frac{ \ell}{2\xi}}.
\end{equation}
The projections $P_{\X}$ and $P_{\X^c}$, which appear in the
statement of Theorem~\ref{thm:gsapp}, are defined to be the spectral
projection corresponding to the matrix $M_{\X}$, respectively
$M_{\X^c}$, onto those eigenvalues less than $\delta$. By
definition, $P_{\X} \in \mathcal{A}_{\X}$ and $P_{\X^c} \in
\mathcal{A}_{\X^c}$ as claimed. Moreover, using Lemma~\ref{lem:bd2},
it is easy to see that
\begin{equation} \label{eq:oxbd}
\max_{Z \in \{ \X, \X^c \}} \left\| \left( 1 - P_Z \right) \psi_0
\right\| \, \leq \, \max_{Z \in \{ \X, \X^c \}} \frac{1}{\delta}
\left\| M_Z \psi_0 \right\| \, \leq \, K  | \partial \X|  e^{-
\frac{\ell}{ 2 \xi}},
\end{equation}
by construction; hence this choice of $\delta$.

With this bound, we can insert these projections into our previous
estimates. In fact,
\begin{equation}
\left\| \tilde{\P}_{\alpha} P_{\X} P_{\X^c} - P_0 \right\| \, \leq
\,  \left\| \left( \tilde{\P}_{\alpha} \, - \, P_0 \right) P_{\X}
P_{\X^c} \right\| \, + \,
  \left\| P_0 \left( 1 \, - \, P_{\X} P_{\X^c} \right) \right\| \, .
\end{equation}
The first term above is estimated using (\ref{eq:appbd3}) above.
Since
\begin{equation}
(1-P_{\X} P_{\X^c}) \, = \, \frac{1}{2} \left\{
(1-P_{\X})(1+P_{\X^c}) \, + \, (1-P_{\X^c})(1+P_{\X}) \right\},
\end{equation}
it is clear that
\begin{equation}
 \left\| P_0 \left( 1 \, - \, P_{\X} P_{\X^c} \right) \right\| \, \leq \, \| P_0 (1 - P_{\X}) \| \, + \, \| P_0 (1- P_{\X^c}) \| \, \leq \, 2 K | \partial \X|  e^{- \ell/ 2 \xi},
\end{equation}
using (\ref{eq:oxbd}). Therefore, we now have that
\begin{equation}
\left\| \tilde{\P}_{\alpha} P_{\X} P_{\X^c} - P_0 \right\| \, \leq
\,K | \partial \X| \ell^2 \, e^{- \frac{\ell}{2\xi}},
\end{equation}
again along the parametrization $2 \alpha \epsilon \ell = \mu v^2$.

In order to define the final observable $P_{\X_{\rm bd}}$, we write
\begin{equation}
\tilde{\P}_{\alpha} P_{\X} P_{\X^c} \, = \,  \sqrt{ \frac{ \alpha}{
\pi}} \int_{-\infty}^{\infty} e^{i(M_{\X}+M_{\X_{\rm
bd}}+M_{\X^c})t} e^{-i(M_{\X}+M_{\X^c})t} e^{i(M_{\X}+M_{\X^c})t}
P_{\X} P_{\X^c} e^{- \alpha t^2} \, dt.
\end{equation}
Since the supports of $M_{\X}$ and $M_{\X^c}$ are disjoint, it is
clear that
\begin{equation}
\begin{split}
e^{i(M_{\X}+M_{\X^c})t} P_{\X} P_{\X^c} \, - & \,P_{\X} P_{\X^c} \, = \\
 \frac{1}{2} (e^{iM_{\X} t} P_{\X}- P_{\X})(e^{iM_{\X^c}t}P_{\X^c}+P_{\X^c}) \, &+ \,
\frac{1}{2}(e^{iM_{\X^c}t}P_{\X^c}-P_{\X^c})(e^{iM_{\X}t}P_{\X}+P_{\X})
\, .
\end{split}
\end{equation}
Moreover, for $Z \in \{ \X ,\X^c \}$, we have that
\begin{eqnarray}
\left| \langle f, (e^{iM_Z t}-1)P_Z g \rangle \right| & = & \left| \int_0^{\delta} (e^{i \lambda t} - 1) \,  d \langle f, E_{\lambda}^ZP_Z g \rangle \right| \nonumber \\
& \leq & \delta \, |t| \, \| f \| \, \| g \|,
\end{eqnarray}
and therefore,
\begin{equation}
\left\| e^{i(M_{\X}+M_{\X^c})t} P_{\X} P_{\X^c} \, -  \,P_{\X}
P_{\X^c} \right\| \, \leq \, 2 \delta |t| \, .
\end{equation}
If we now define the operator,
\begin{equation} \label{eq:phat}
\hat{\P}_{\alpha} \, = \, \sqrt{ \frac{ \alpha}{ \pi}}
\int_{-\infty}^{\infty} e^{i(M_{\X}+M_{\X_{\rm bd}}+M_{\X^c})t}
e^{-i(M_{\X}+M_{\X^c})t} e^{- \alpha t^2} \, dt,
\end{equation}
then we have just demonstrated that
\begin{eqnarray}
\| \hat{\P}_{\alpha} P_{\X} P_{\X^c} - P_0 \| & \leq & \| \hat{\P}_{\alpha}  P_{\X} P_{\X^c} - \tilde{\P}_{\alpha}P_{\X}P_{\X^c} \| \, + \, \| \tilde{\P}_{\alpha}P_{\X}P_{\X^c} - P_0 \| \nonumber \\
& \leq &  2 \delta \sqrt{ \frac{ \alpha}{ \pi}}
\int_{-\infty}^{\infty}
|t|  e^{- \alpha t^2} \, dt \, + \, K | \partial \X| \ell^2 \, e^{- \frac{\ell}{2\xi}} \nonumber \\
&\leq & K | \partial \X| \ell^2 \, e^{- \frac{\ell}{2\xi}}.
\end{eqnarray}

The proof of Theorem~\ref{thm:gsapp} is complete if we show that the
operator $\hat{\P}_{\alpha}$, defined in (\ref{eq:phat}) above,
is well approximated by a local observable. In fact, below we introduce 
an observable $P_{\X_{\rm bd}} \in \A_{\X_{\rm bd}(3\ell)}$ with $\| P_{\X_{\rm bd}} \| \leq 1$ for which 
\begin{equation} \label{eq:goal}
\left\| P_{\X_{\rm bd}}  - \hat{P}_{\alpha} \right\| \, \leq
\, K | \partial \X|^2 \ell^{7/2} e^{- \frac{\ell}{\xi}} \, .
\end{equation}
{F}rom this estimate, Theorem~\ref{thm:gsapp} easily follows as
\begin{eqnarray}
\|  P_{\X_{\rm bd}}P_{\X} P_{\X^c} - P_0 \| & \leq & \left\| \left(P_{\X_{\rm bd}} - \hat{\P}_{\alpha} \right) P_{\X}P_{\X^c} \right\| \, + \, \| \hat{\P}_{\alpha}P_{\X}P_{\X^c} - P_0 \| \nonumber \\
& \leq & K  | \partial \X|^2 \ell^{7/2} e^{- \frac{\ell}{2
\xi}} \, .
\end{eqnarray}
We need only prove that such an observable $P_{\X_{\rm bd}}$ exists.

The existence of this observable is simple. We just take 
the normalized partial trace of $\hat{P}_{\alpha}$ over 
the complimentary Hilbert space, i.e., the one associated 
with $\X_{\rm bd}(3\ell)^c \subset V$. To prove the estimate in
(\ref{eq:goal}), it is convenient to calculate this partial trace
as an integral over the group of unitaries, technical details on
this may be found in \cite{bravyi2006} and also \cite{nasi4}. 
We recall that given an observable $A \in \A_V$ and a 
set $Y \subset V$, one may define an observable 
\begin{equation}
\langle A \rangle_Y = \int_{\mathcal{U}(Y^c)} U^* A U \, \mu(dU),
\end{equation}
where $\mathcal{U}(Y^c)$ denotes the group of unitary operators over
the Hilbert space $\mathcal{H}_{Y^c}$ and $\mu$ is the associated,
normalized Haar measure. It is easy to see that for any $A \in
\A_V$, the observable $\langle A \rangle_{Y}$ has been localized to
$Y$ in the sense of supports, i.e.,  $\langle A \rangle_{Y} \in \A_Y$.

Let us define
\begin{equation}
P_{\X_{\rm bd}} \, = \, \langle \hat{\P}_{\alpha}
\rangle_{\X_{\rm bd}(3\ell)}.
\end{equation}
Clearly $P_{\X_{\rm bd}} \in A_{\X_{\rm bd}(3 \ell)}$ and
$\|P_{\X_{\rm bd}}\|\leq 1$ as desired.

The difference between $\hat{\P}_{\alpha}$ and $P_{\X_{\rm
bd}}$ can be expressed in terms of a commutator, in fact,
\begin{equation} \label{eq:opdif}
\hat{\P}_{\alpha} \, - \, P_{\X_{\rm bd}} \, = \,
\int_{\mathcal{U}(\X_{\rm bd}(3\ell)^c)} U^* \left[
\hat{\P}_{\alpha},  U  \right] \, \mu(dU).
\end{equation}
Thus, to show that the difference between $\hat{\P}_{\alpha}$ and
$P_{\X_{\rm bd}}$ is small (in norm), we need only estimate
the commutator of $\hat{\P}_{\alpha}$  with an arbitrary unitary
supported in $\X_{\rm bd}(3\ell)^c$.

This follows from a Lieb-Robinson estimate. Note that
\begin{equation} \label{eq:comm}
\left[ \hat{\P}_{\alpha}, U \right] \, = \, \sqrt{ \frac{ \alpha}{
\pi}} \int_{- \infty}^{\infty} \left[ e^{i(M_{\X} +M_{\X_{\rm
bd}}+M_{\X^c})t} e^{-i(M_{\X}+M_{\X^c})t}, U \right] e^{- \alpha
t^2}  \, dt.
\end{equation}
To estimate the integrand, we define the function
\begin{equation}
f(t) \, = \,  \left[ e^{i(M_{\X} +M_{\X_{\rm bd}}+M_{\X^c})t}
e^{-i(M_{\X} +M_{\X^c})t}, U \right] .
\end{equation}
A short calculation demonstrates that
\begin{equation} \label{eq:fder}
f'(t) \, = \, i   \left[ e^{i(M_{\X} +M_{\X_{\rm bd}}+M_{\X^c})t}
M_{\X_{\rm bd}} e^{-i(M_{\X} +M_{\X^c})t}, U \right] .
\end{equation}
The form of the derivative appearing in (\ref{eq:fder}) suggests
that we define the evolution
\begin{equation}
\alpha_t(A) = e^{i(M_{\X} +M_{\X_{\rm bd}}+M_{\X^c})t}Ae^{-i(M_{\X}
+M_{\X_{\rm bd}}+M_{\X^c})t} \quad \quad \mbox{for any } \quad A \in
\mathcal{A}_V.
\end{equation}

With this in mind, we rewrite
\begin{eqnarray}
f'(t) & = &  i   \left[ e^{i(M_{\X} +M_{\X_{\rm bd}}+M_{\X^c})t} M_{\X_{\rm bd}} e^{-i(M_{\X} +M_{\X^c})t}, U \right] \nonumber \\
& = &  i   \left[ \alpha_t( M_{\X_{\rm bd}}) e^{i(M_{\X} +M_{\X_{\rm bd}}+M_{\X^c})t} e^{-i(M_{\X} +M_{\X^c})t} , U \right] \nonumber \\
& = &  i \alpha_t( M_{\X_{\rm bd}}) f(t)  + i  \left[ \alpha_t(
M_{\X_{\rm bd}}), U \right] e^{i(M_{\X} +M_{\X_{\rm bd}}+M_{\X^c})t}
e^{-i(M_{\X} +M_{\X^c})t} .
\end{eqnarray}

Written as above, the function $f$ can be bounded using
norm-preservation, see \cite{nasi1} and \cite{nasi4} for details. For example, let  $V(t)$ be the unitary evolution
that satisfies the time-dependent differential equation
\begin{equation}
i \frac{d}{dt} V(t) =  V(t) \alpha_t(M_{\X_{\rm bd}})  \quad \quad
\mbox{with } V(0) = \idty.
\end{equation}
Explicitly, one has that
\begin{equation}
V(t) = e^{i(M_{\X} +M_{\X^c})t} e^{-i(M_{\X} +M_{\X_{\rm
bd}}+M_{\X^c})t}.
\end{equation}
Considering now the product
\begin{equation}
g(s) = V(s)  f(s)
\end{equation}
it is easy to see that
\begin{equation}
g'(s) = V'(s) f(s) + V(s) f'(s) = i V(s)  \left[ \alpha_s(
M_{\X_{\rm bd}}), U \right]  V(s)^*.
\end{equation}
Thus
\begin{equation}
 V(t) f(t) = g(t) - g(0) = \int_0^t g'(s) ds,
\end{equation}
and therefore,
\begin{equation}
f(t) \, = \, i  V(t)^* \int_0^t V(s) \left[ \alpha_s( M_{\X_{\rm
bd}}), U \right] V(s)^* \, ds .
\end{equation}
The bound
\begin{equation} \label{eq:fbd1}
\| f(t) \| \leq \int_0^{|t|} \left\|  \left[ \alpha_s( M_{\X_{\rm
bd}}), U \right] \right\| \, ds.
\end{equation}
readily follows.

Unfortunately, the Lieb-Robinson velocity associated to the dynamics
$\alpha_t( \cdot)$ grows with $\ell$. For this reason, we estimate (\ref{eq:fbd1}) by
comparing back to the original dynamics. Consider the interpolating
dynamics
\begin{equation}
h_s(r) \, = \, \alpha_r \left(  \tau_{s-r}(A) \right),
\end{equation}
for any local observable $A$ and $0 \leq r \leq s$. Here $\tau_r(A)
= e^{iH_V r}Ae^{-i H_V r}$. With $s$ fixed, it is easy to calculate
\begin{equation}
h_s'(r) \, = \, i \alpha_{s-r} \left( \left[ (M_{\X} +M_{\X_{\rm
bd}}+M_{\X^c})-H_V, \tau_r(A) \right] \right).
\end{equation}
We conclude then that
\begin{eqnarray} \label{eq:compbd}
\| \alpha_s(M_{\X_{\rm bd}}) - \tau_s(M_{\X_{\rm bd}}) \| & = & \left\| \int_0^s h_s'(r) \, dr \right\| \nonumber \\
& \leq & 2 \, \| M_{\X_{\rm bd}} \| \, \| H_V - (M_{\X} +M_{\X_{\rm bd}}+M_{\X^c}) \| \, s \nonumber \\
& \leq & 2 \left( J C_{\Phi} | \X_{\rm bd}| \right) \, \left( K | \partial \X| \ell^{3/2} e^{- \frac{\ell}{\xi}} \right) \, s \nonumber \\
& \leq & K  | \partial \X|^2 \ell^{5/2} e^{- \frac{\ell}{ \xi}} \,
s,
\end{eqnarray}
where, for the second inequality above, we used Lemma \ref{lem:bd2}
and the bound
\begin{eqnarray}
\| M_{\X_{\rm bd}} \| & \leq & \sum_{x \in \X_{\rm bd}}
\sum_{\stackrel{X \subset V:}{x \in X}}
\sqrt{\frac{ \alpha}{ \pi}} \int_{- \infty}^{\infty} \left\| \tau_t^{\X_{\rm bd}} \left( \Phi(X) \right) \right\| e^{- \alpha t^2} \, dt \nonumber \\
& \leq & J C_{\Phi} | \X_{\rm bd} | \, .
\end{eqnarray}
{F}rom (\ref{eq:fbd1}), it is clear that
\begin{equation}
\| f(t) \|  \leq  \int_0^{|t|} \left\| \left[ \tau_s(M_{\X_{\rm
bd}}), U \right] \right\| \, ds \, + \,   \int_0^{|t|} \left\|
\left[ \alpha_s(M_{\X_{\rm bd}}) \, - \,  \tau_s(M_{\X_{\rm bd}}), U
\right] \right\| \, ds \, .
\end{equation}
The first term above, we bound, using Theorem~\ref{thm:lrb}, as
follows
\begin{eqnarray}
 \int_0^{|t|} \left\| \left[ \tau_s(M_{\X_{\rm bd}}), U \right] \right\| \, ds & \leq & C | \partial \X_{\rm bd}(2 \ell)| \, \| M_{\X_{\rm bd}} \| \, \int_0^{|t|} e^{- \mu \left( d( \X_{\rm bd}(2 \ell),U) - v|s| \right)} \, ds  \nonumber \\
 & \leq & K | \partial \X|^2 \ell^2 e^{- \mu \ell} \frac{1}{\mu v} \left( e^{\mu v|t|} - 1 \right) \,.
\end{eqnarray}
For the second, we apply (\ref{eq:compbd}). Thus, we have found that
\begin{eqnarray}
\| [ \hat{\P}_{\alpha}, U ] \| & \leq & \sqrt{ \frac{ \alpha}{ \pi}} \int_{- \infty}^{\infty} \| f(t) \| e^{- \alpha t^2} dt \nonumber \\
& \leq &  K | \partial \X|^2 \ell^2 e^{- \mu \ell} \frac{1}{\mu v} \sqrt{ \frac{ \alpha}{ \pi}} \int_{- \infty}^{\infty} e^{\mu v|t|} \, e^{- \alpha t^2} \, dt \nonumber \\
& \mbox{ } & \quad + 2 K  | \partial \X|^2 \ell^{5/2} e^{- \frac{\ell}{ \xi}} \sqrt{ \frac{ \alpha}{ \pi}} \int_{- \infty}^{\infty} t^2 \, e^{- \alpha t^2} \, dt \nonumber \\
& \leq & K \| \partial \X|^2 \ell^{7/2} e^{- \frac{ \ell}{\xi}} \, .
\end{eqnarray}
Combining this with \eqref{eq:opdif}, we obtain that
\begin{equation} \label{eq:goal2}
\left\| P_{\X_{\rm bd}}  - \hat{P}_{\alpha} \right\| \, \leq
\, K | \partial \X|^2 \ell^{7/2} e^{- \frac{\ell}{\xi}},
\end{equation}
and the proof of Theorem~\ref{thm:gsapp} is now complete.

%
%
%
%

\Section{A Technical Estimate}\label{sec:test}

In this section, we will prove Lemma \ref{lem:bd2}.
For convenience, we restate the result.

\begin{lem} 
Let $\mathcal{V}$ be a countable, locally finite set equipped with a metric $d$ for which
(\ref{def:kappa}) holds, and let $\Phi$ be an interaction on $\mathcal{V}$ 
which satisfies assumptions {\bf A1} - {\bf A4}. For the finite 
set $V \subset \mathcal{V}$ from {\bf A4} and any set $\X \subset V$
satisfying (\ref{eq:setbd}) for all $\ell >R$, the estimate 
\begin{equation} \label{eq:totbd5}
\left\| H_V \, - \, \left(M_{\X}( \alpha) + M_{\X_{\rm bd}}(\alpha)
+M_{\X^c}( \alpha) \right) \right\| \, \leq \, K | \partial \X|
\ell^{3/2} e^{- \frac{\ell}{\xi}},
\end{equation}
holds along the parametrization $2 \alpha \epsilon \ell = \mu v^2$ when $\mu \geq 2 \mu_0$. The numbers $\xi$ and $\epsilon$
are defined in terms of the gap $\gamma$ and the quantities $\mu$ and $v$ from the 
Lieb-Robinson estimates as
\begin{equation}
0 < \frac{2}{\xi} = (1- \epsilon) \mu = \frac{ \gamma^2}{ \mu^2 v^2 + \gamma^2}  \mu \, .
\end{equation}
Our bounds on the prefactor $K$ depend on various parameters, e.g. $J^2, C_{\Phi}, N_{\Phi}, \mu, v, \gamma, R, \kappa_{\mu/2}$,
but it is independent of $\ell, V$, and $\X$. Along the same
parametrization, the bound
\begin{equation} \label{eq:Monpsi5}
\max \left\{ \left\| M_{\X}( \alpha) \psi_0 \right\|, \left\|
M_{\X_{\rm bd}}( \alpha) \psi_0 \right\|, \left\| M_{\X^c} ( \alpha)
\psi_0 \right\| \right\}  \, \leq \, K | \partial \X| \ell^{3/2}
e^{- \frac{\ell}{\xi}}.
\end{equation}
also holds.
\end{lem}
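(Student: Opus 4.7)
Both estimates reduce to the same two ingredients: (i) a Lieb--Robinson comparison between the local smeared observables $M_Z(\alpha)$ and their global analogues $(H^b_{Z'})_\alpha$ built from the full dynamics $\tau_t^V$, and (ii) a spectral--gap estimate showing that $(H^b_{Z'})_\alpha\psi_0$ is exponentially small in $\gamma^2/\alpha$. The plan is to prove each piece separately, then choose $\alpha$ so that both error terms are comparable, giving the stated rate $e^{-\ell/\xi}$.

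\textbf{The Lieb--Robinson comparison.} For each $Z\in\{\X,\X_{\rm bd}(2\ell),\X^c\}$ with its associated ``input'' observable $H^b_{Z'}\in\{H^b_{\X_{\rm int}},H_{\X_{\rm bd}},H^b_{\X_{\rm ext}}\}$, I will write
\[
M_Z(\alpha)-(H^b_{Z'})_\alpha=\sqrt{\tfrac{\alpha}{\pi}}\int_{-\infty}^\infty\bigl(\tau_t^{Z}(H^b_{Z'})-\tau_t^V(H^b_{Z'})\bigr)e^{-\alpha t^2}\,dt,
\]
and bound the integrand by the Duhamel identity
\[
\tau_t^V(A)-\tau_t^{Z}(A)=i\int_0^t\tau_s^V\bigl(\bigl[H_V-H_{Z},\tau_{t-s}^{Z}(A)\bigr]\bigr)\,ds.
\]
Decomposing $H_V-H_Z=\sum_{Y}\Phi(Y)$ over interactions that cross the boundary of $Z$, and decomposing $H^b_{Z'}=\sum_X\Phi(X)$, Theorem~\ref{thm:lrb} applied to each pair $(\Phi(Y),\tau_{t-s}^Z(\Phi(X)))$ yields a commutator bounded by $c J^2 e^{-\mu(d(X,Y)-v|t-s|)}$ with the minimum of the boundary sizes in front. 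Because of the definitions of $\X_{\rm int}$, $\X_{\rm bd}(2\ell)$ and $\X_{\rm ext}$, the relevant distance $d(X,Y)$ is at least $\ell-2R$. Summing over the pairs using $C_\Phi$ and $N_\Phi$ and the structural bound \eqref{eq:setbd} produces a prefactor of order $|\partial \X|\,\ell$; Gaussian integration against $e^{-\alpha t^2}$ then completes the square in $t$ and produces a factor $\exp\bigl[-\mu\ell+\mu^2v^2/(4\alpha)\bigr]$, together with a moment of order $1/\sqrt{\alpha}\sim\sqrt\ell$ from the $\int_0^t ds$ in Duhamel.

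\textbf{The spectral--gap piece.} After subtracting $\langle\psi_0,H_V\psi_0\rangle$ we have $P_0 H^b_{Z'}\psi_0=0$, so the spectral decomposition of $H_V$ gives
\[
(H^b_{Z'})_\alpha\psi_0=\sum_{E_n\geq\gamma}c_n e^{-E_n^2/(4\alpha)}\phi_n,
\qquad\bigl\|(H^b_{Z'})_\alpha\psi_0\bigr\|\leq e^{-\gamma^2/(4\alpha)}\|H^b_{Z'}\psi_0\|.
\]
The norm $\|H^b_{Z'}\psi_0\|\leq\|H^b_{Z'}\|$ is controlled by $J\, C_\Phi$ times the number of interaction terms touching $Z'$, which, using assumption \textbf{A3} and \eqref{eq:setbd}, is at most $C\ell|\partial\X|$. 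Combined with the comparison bound above one obtains
\[
\bigl\|M_Z(\alpha)\psi_0\bigr\|\leq \bigl\|M_Z(\alpha)-(H^b_{Z'})_\alpha\bigr\|+\bigl\|(H^b_{Z'})_\alpha\psi_0\bigr\|.
\]

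\textbf{Balancing and conclusion.} The two error terms are of the form $|\partial\X|\,\mathrm{poly}(\ell)\,e^{-\mu\ell+\mu^2v^2/(4\alpha)}$ and $|\partial\X|\,\ell\,e^{-\gamma^2/(4\alpha)}$. Choosing $\alpha=(\mu^2v^2+\gamma^2)/(2\mu\ell)$, equivalently $2\alpha\epsilon\ell=\mu v^2$ with $\epsilon=\mu^2v^2/(\mu^2v^2+\gamma^2)$, makes both exponents equal to $-\ell/\xi$ with $2/\xi=(1-\epsilon)\mu=\gamma^2\mu/(\mu^2v^2+\gamma^2)$. Collecting the polynomial prefactors—$|\partial\X|\ell$ from \eqref{eq:setbd}, $\sqrt{\ell}$ from one Duhamel moment—gives the claimed $|\partial\X|\ell^{3/2}e^{-\ell/\xi}$. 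The main obstacle I anticipate is the careful bookkeeping of the boundary sums: one has to verify that in each of the three Lieb--Robinson comparisons, the sum over interactions $Y\not\subset Z$ and $X\subset Z'$ is bounded by $C_\Phi N_\Phi|\partial\X|$ (uniformly in $V$), and that the finiteness of $\kappa_{\mu/2}$ coming from \eqref{def:kappa} absorbs the tail of the exponential in distance without destroying the polynomial exponent $3/2$ in $\ell$.
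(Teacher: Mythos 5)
The Lieb--Robinson piece of your argument is sound and follows the same route as the paper's Proposition~\ref{prop:bd2}; a minor slip is that the distance exponent you extract from the summation over interaction pairs is $e^{-\mu\ell/2}$, not $e^{-\mu\ell}$: one must split $e^{-\mu d(x,y)}=e^{-\mu d(x,y)/2}e^{-\mu d(x,y)/2}$, spend one half on the distance floor $d(x,y)\geq\ell-2R$, and keep the other half to make the double sum over $x,y$ converge via $\kappa_{\mu/2}$ (this is precisely why the lemma requires $\mu\geq 2\mu_0$). The balancing with the gap term still works out because you set $\alpha=(\mu^2v^2+\gamma^2)/(2\mu\ell)$, i.e.\ $2\alpha\epsilon\ell=\mu v^2$, which matches the $e^{-\mu\ell/2}$ form.

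The genuine gap is in the spectral--gap piece. You bound $\|(H^b_{Z'})_\alpha\psi_0\|\leq e^{-\gamma^2/4\alpha}\|H^b_{Z'}\psi_0\|\leq e^{-\gamma^2/4\alpha}\|H^b_{Z'}\|$ and then claim that $\|H^b_{Z'}\|$ is ``$JC_\Phi$ times the number of interaction terms touching $Z'$, at most $C\ell|\partial\X|$.'' This is false for $Z'=\X_{\rm int}$ and $Z'=\X_{\rm ext}$: the Hamiltonian $H^b_{\X_{\rm int}}$ contains every interaction term that meets $\X_{\rm int}$, and there are on the order of $C_\Phi|\X_{\rm int}|$ of these. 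Inequality \eqref{eq:setbd} bounds $|\partial\X_{\rm int}|$, $|\X_{\rm bd}|$, and $|\partial\X_{\rm ext}|$ by $C\ell|\partial\X|$, but says nothing about the bulk sets $|\X_{\rm int}|$ or $|\X_{\rm ext}|$, which scale with the volume of $\X$ (resp.\ $V\setminus\X$), not with a boundary strip. So $\|H^b_{\X_{\rm int}}\|$ is volume-sized, and your estimate would give a bound diverging as $V$ grows. The paper gets around this precisely by inserting one more factor of $H_V$ before smearing: since $H_V\psi_0=0$ and $(A)_\alpha\psi_0\perp\psi_0$, one has $\|(A)_\alpha\psi_0\|\leq\gamma^{-1}\|H_V(A)_\alpha\psi_0\|=\gamma^{-1}\|([H_V,A])_\alpha\psi_0\|\leq\gamma^{-1}e^{-\gamma^2/4\alpha}\|[H_V,A]\|$ (Proposition~\ref{prop:bd}), and the commutator $[H_V,H^b_{\X_{\rm int}}]=[H_{\X_{\rm bd}},H^b_{\X_{\rm int}}]$ \emph{is} boundary-localized, so its norm is $O(\ell|\partial\X|)$ (Proposition~\ref{prop:commutator}). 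You need this commutator step; without it the spectral--gap piece does not close.
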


To prove Lemma~\ref{lem:bd2}, we begin with a few simple propositions.
Let $V \subset \mathcal{V}$ be the finite set described in {\bf A4}, take $\ell >R$, and 
fix $\X \subset V$ satisfying (\ref{eq:setbd}).
As we demonstrates in Section~\ref{sec:setup}, see also Section~\ref{sec:gsat},
the local Hamiltonian corresponding to $\Phi$ in $V$ can be written as
\begin{equation} \label{eq:hamdec2}
H_V  =  H_{\X_{\rm int}}^b + H_{\X_{\rm bd}} + H_{\X_{\rm ext}}^b
\end{equation}
where the sets $\X_{\rm int}$, $\X_{\rm bd}$, and $\X_{\rm ext}$, and the corresponding local Hamiltonians, each depend on
a length scale $\ell >R$. We begin with a basic commutator estimate.

\begin{proposition} \label{prop:commutator}
Let $\Phi$ be an interaction on $\mathcal{V}$ which satisfies assumptions {\bf
A1} - {\bf A4}. Let $V \subset \mathcal{V}$ be the finite set described in {\bf A4}, take $\ell >R$, and 
fix $\X \subset V$ satisfying (\ref{eq:setbd}). One has that
\begin{equation} \label{eq:hamcombd}
\max \left\{ \| [H_V, H_{\X_{\rm int}}^b] \|,  \| [H_V,H_{\X_{\rm bd}}] \|,  \| [H_V,H_{\X_{\rm ext}}^b] \|   \right\} \leq 4 C J^2 C_{\Phi} N_{\Phi} | \partial \X| \ell \, ,
\end{equation}
where the quantities $C_{\Phi}$ and $N_{\Phi}$ are as introduced in \eqref{eq:finiteint1} and \eqref{eq:finiteint2}.
\end{proposition}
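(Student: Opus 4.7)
\medskip

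\noindent\textbf{Proof plan for Proposition~\ref{prop:commutator}.}

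The plan is to exploit the decomposition \eqref{eq:hamdec2} together with the finite range assumption \textbf{A1} to kill one of the three cross-commutators outright, and then to control the remaining two by a direct counting argument using \textbf{A1}--\textbf{A4} and the structural bound \eqref{eq:setbd}. First I would show that $[H_{\X_{\rm int}}^b, H_{\X_{\rm ext}}^b] = 0$. Indeed, every $\Phi(X)$ appearing in $H_{\X_{\rm int}}^b$ has $X$ of diameter at most $R$ and meeting $\X_{\rm int}$; since points of $\X_{\rm int}$ are at distance $\geq \ell > R$ from $\partial \X$, such an $X$ must lie entirely in $\X$. An analogous argument, using that points of $\X_{\rm ext}$ are at distance $\geq \ell > R$ from $\partial \X$, forces every interaction term in $H_{\X_{\rm ext}}^b$ to be supported in $V\setminus \X$. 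Hence the two operators are supported on disjoint sets and commute.

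With this observation, using $[H_Z^{(b)},H_Z^{(b)}]=0$, each of the three commutators in \eqref{eq:hamcombd} reduces to at most two cross-terms of the form $[H^{(b)}_{Z_1},H^{(b)}_{Z_2}]$ with $Z_1, Z_2$ chosen from $\{\X_{\rm int},\X_{\rm bd},\X_{\rm ext}\}$. Expanding in terms of interactions and applying $\|[\Phi(X),\Phi(Y)]\|\leq 2J^2\chi_\Phi(X)\chi_\Phi(Y)$ (from \textbf{A2}) when $X\cap Y\neq\emptyset$ and zero otherwise, the typical cross-commutator to estimate is
\begin{equation*}
\|[H_{\X_{\rm bd}},H_{\X_{\rm int}}^b]\| \,\leq\, 2J^2 \sum_{\substack{X\subset \X_{\rm bd}\\ \chi_\Phi(X)=1}}\sum_{\substack{Y\cap\X_{\rm int}\neq\emptyset\\ X\cap Y\neq\emptyset,\ \chi_\Phi(Y)=1}} 1 \, .
\end{equation*}

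To bound this sum I would first fix a point $y\in Y\cap\X_{\rm int}$; since $Y$ also contains some $z\in X\subset\X_{\rm bd}$ with $d(y,z)\leq R$ and $\X_{\rm bd}\subset V\setminus\X_{\rm int}$, we have $y\in\partial\X_{\rm int}$. Then for each such $y$, there are at most $C_\Phi$ choices of $Y\ni y$ by \eqref{eq:finiteint1}; and for each such $Y$, summing over $z\in Y$ and over $X\ni z$ gives at most $\sum_{z\in Y}C_\Phi \leq |Y|C_\Phi$, which when summed against the outer $Y$-sum yields at most $N_\Phi C_\Phi$ by \eqref{eq:finiteint2}. Thus the whole double sum is bounded by $|\partial\X_{\rm int}|\,N_\Phi C_\Phi$, and using \eqref{eq:setbd} we obtain $\|[H_{\X_{\rm bd}},H_{\X_{\rm int}}^b]\|\leq 2CJ^2C_\Phi N_\Phi|\partial\X|\ell$. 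The commutator $\|[H_{\X_{\rm bd}},H_{\X_{\rm ext}}^b]\|$ is bounded identically, with $|\partial\X_{\rm int}|$ replaced by $|\partial\X_{\rm ext}|$ (also controlled by \eqref{eq:setbd}). Combining, $\|[H_V,H_{\X_{\rm int}}^b]\|$ and $\|[H_V,H_{\X_{\rm ext}}^b]\|$ each pick up only one such cross-term, while $\|[H_V,H_{\X_{\rm bd}}]\|$ picks up two, yielding the factor $4$ in the stated bound.

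The main obstacle is bookkeeping in the counting step: one must correctly identify the anchor point ($y\in\partial\X_{\rm int}$ or $\partial\X_{\rm ext}$) whose cardinality is controlled by \eqref{eq:setbd}, and then arrange the remaining double sum so that it is successively bounded by $C_\Phi$ (finitely many interaction terms through a fixed point) and by $N_\Phi$ (finite total cardinality). Everything else follows from assumptions \textbf{A1}--\textbf{A4}, the disjointness observation in the first paragraph, and the triangle inequality.
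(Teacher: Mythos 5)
Your proof is correct and follows essentially the same route as the paper: reduce via the decomposition $H_V = H_{\X_{\rm int}}^b + H_{\X_{\rm bd}} + H_{\X_{\rm ext}}^b$ to cross-commutators with $H_{\X_{\rm bd}}$ (the paper states the reduction as $-C_{\rm bd}=C_{\rm int}+C_{\rm ext}$, you phrase it via the vanishing of $[H_{\X_{\rm int}}^b, H_{\X_{\rm ext}}^b]$, which is the same observation), then anchor the double sum at a point of $\partial\X_{\rm int}$ (resp.\ $\partial\X_{\rm ext}$), extract one factor $N_\Phi$ and one factor $C_\Phi$, and invoke \eqref{eq:setbd}. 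The only cosmetic difference is a relabeling of which interaction set carries the $N_\Phi$ weight.
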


\begin{proof}
Define
\begin{equation}
C_{\rm int} = \left[H_V, H_{\X_{\rm int}}^b \right], \quad C_{\rm
bd} = \left[H_V, H_{\X_{\rm bd}} \right], \quad \mbox{and} \quad
C_{\rm ext} = \left[H_V, H_{\X_{\rm ext}}^b \right].
\end{equation}
{F}rom the definitions, (\ref{eq:xint}), (\ref{eq:xbd}), and
(\ref{eq:xext}), and the fact that $\ell >R$, it is clear that
$C_{\rm int} = \left[H_{\X_{\rm bd}}, H_{\X_{\rm int}}^b \right]$,
$C_{\rm ext} = \left[H_{\X_{\rm bd}}, H_{\X_{\rm ext}}^b \right]$,
and $- C_{\rm bd} = C_{\rm int} + C_{\rm ext}$. Clearly,
\begin{equation}
C_{\rm int} = \left[ H_{\X_{\rm bd}}, H_{\X_{\rm int}}^b \right] =
\sum_{\stackrel{X \subset V:}{X \cap \partial \X_{\rm int} \neq
\emptyset}}  \left[ H_{\X_{\rm bd}}, \Phi(X) \right],
\end{equation}
and therefore,
\begin{eqnarray}
\| C_{\rm int} \| & \leq & \sum_{x \in \partial \X_{\rm int}} \sum_{\stackrel{X \subset V:}{x \in X}} \left\| \left[ H_{\X_{\rm bd}}, \Phi(X) \right] \right\| \nonumber \\
& \leq &  \sum_{x \in \partial \X_{\rm int}} \sum_{\stackrel{X \subset V:}{x \in X}}  \sum_{z \in X} \sum_{\stackrel{Y \subset V:}{z \in Y}} \left\| \left[ \Phi(Y), \Phi(X) \right] \right\| \nonumber \\
& \leq & 2 J^2  \sum_{x \in \partial \X_{\rm int}} \sum_{\stackrel{X \subset V:}{x \in X}} \sum_{z \in X} \sum_{\stackrel{Y \subset V:}{z \in Y}} \chi_{\Phi}(Y) \chi_{\Phi}(X)  \nonumber \\
& \leq & 2 J^2 C_{\Phi} N_{\Phi} | \partial \X_{\rm int} |.
\end{eqnarray}
A similar estimates applies to $C_{\rm ext}$. Using (\ref{eq:setbd}), the proof is complete.
\end{proof}

As is discussed in Section~\ref{sec:setup}, the first step in the proof of 
Theorem~\ref{thm:gsapp} is to introduce the smearing operation 
$( \cdot )_{\alpha}$, see (\ref{eq:smear}). One consequence of this definition is
\begin{equation} \label{eq:hamalpha}
H_V  =  (H_{\X_{\rm int}}^b)_{\alpha} + (H_{\X_{\rm bd}})_{\alpha} +
(H_{\X_{\rm ext}}^b)_{\alpha}.
\end{equation}
When applied to the ground state, each of the terms above can be estimated, in norm, in terms of the
system's gap. This is the content of Proposition~\ref{prop:bd} below.

\begin{proposition} \label{prop:bd}
Let $\Phi$ be an interaction on $\mathcal{V}$ which satisfies assumptions {\bf A1} - {\bf A4}. For any finite set $\X \subset V \subset \mathcal{V}$, $\alpha >0$, and  $A\in\{
H_{\X_{\rm int}}^b, H_{\X_{\rm bd}}, H_{\X_{\rm ext}}^b \}$,
\begin{equation}
\| ( A)_{\alpha} \psi_0 \| \, \leq \, \frac{ \| [H_V, A ] \| }{
\gamma} e^{- \frac{\gamma^2}{4 \alpha}} .
\end{equation}
\end{proposition}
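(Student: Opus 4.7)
The plan is to combine a spectral-theorem computation for the Gaussian smearing with the observation that $A\psi_0$ lives in the spectral subspace above the gap, and then trade a factor of exponential decay for a polynomial weight $\lambda^2/\gamma^2$ in order to bring in a commutator with $H_V$.

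First, since the ground state energy has been normalized to zero, $H_V\psi_0 = 0$ and, by the assumption that each term in the decomposition \eqref{eq:hamdecomp} has zero ground state expectation, $\langle \psi_0, A \psi_0\rangle = 0$ for each of the three choices of $A$. Thus $A\psi_0 \perp \psi_0$, and if $E_\lambda$ denotes the spectral resolution of $H_V$, then $A\psi_0 = \int_\gamma^\infty dE_\lambda\, A\psi_0$. Performing the Gaussian integral over $t$ inside the spectral representation gives the explicit formula
\begin{equation}
(A)_\alpha \psi_0 \, = \, \int_\gamma^\infty e^{-\lambda^2/(4\alpha)}\, dE_\lambda\, A\psi_0,
\end{equation}
so that $\|(A)_\alpha\psi_0\|^2 = \int_\gamma^\infty e^{-\lambda^2/(2\alpha)}\, d\|E_\lambda A\psi_0\|^2$.

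The key trick is then the elementary inequality $e^{-\lambda^2/(2\alpha)} \leq (\lambda^2/\gamma^2)\, e^{-\gamma^2/(2\alpha)}$, valid for $\lambda \geq \gamma$ (the ratio of exponentials is bounded by $1$, while $\lambda^2/\gamma^2 \geq 1$). Substituting this into the spectral integral produces $\int_\gamma^\infty \lambda^2\, d\|E_\lambda A\psi_0\|^2 = \|H_V A\psi_0\|^2$, yielding
\begin{equation}
\|(A)_\alpha \psi_0\| \, \leq \, \frac{1}{\gamma}\, e^{-\gamma^2/(4\alpha)}\, \|H_V A \psi_0\|.
\end{equation}

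Finally, because $H_V\psi_0 = 0$, one has the identity $H_V A \psi_0 = [H_V, A]\psi_0$, and bounding this by operator norm times $\|\psi_0\| = 1$ gives the stated estimate. The only step that is more than routine is the exponential-to-polynomial trade-off in the spectral weight, which is exactly what allows the gap $\gamma$ to appear as a divisor and the commutator $[H_V, A]$ (estimated in Proposition~\ref{prop:commutator}) to appear as the operator-norm factor; everything else is direct application of the spectral theorem together with the vanishing of $H_V\psi_0$ and of the ground state expectation of $A$.
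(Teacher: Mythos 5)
Your proof is correct, and it organizes the calculation a bit differently from the paper. The paper first uses the gap to write
$\|(A)_{\alpha}\psi_0\| \leq \gamma^{-1}\|H_V (A)_{\alpha}\psi_0\|$ (valid since $(A)_{\alpha}\psi_0 \perp \psi_0$), then invokes the identity $[H_V,(A)_{\alpha}] = ([H_V,A])_{\alpha}$ (which holds because $\tau_t$ commutes with $[H_V,\cdot\,]$) together with $H_V\psi_0=0$ to reduce to estimating $\|(C_A)_{\alpha}\psi_0\|$ with $C_A=[H_V,A]$, and finally performs a second spectral/Gaussian computation on $(C_A)_{\alpha}\psi_0$ to extract the factor $e^{-\gamma^2/4\alpha}$. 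You instead perform a single spectral computation directly on $(A)_{\alpha}\psi_0 = \int_{\gamma}^{\infty} e^{-\lambda^2/4\alpha}\,dE_{\lambda}A\psi_0$ and then use the pointwise bound $e^{-\lambda^2/2\alpha} \leq (\lambda/\gamma)^2 e^{-\gamma^2/2\alpha}$ for $\lambda\geq\gamma$ to produce the $\gamma^{-1}$, the Gaussian factor, and the $H_V A\psi_0$ weight all at once; the commutator then enters only at the last step via $H_V A\psi_0 = [H_V,A]\psi_0$. The net effect is the same bound, but your route avoids the need to introduce the smeared commutator $(C_A)_{\alpha}$ and the observation that Gaussian smearing intertwines with $[H_V,\cdot\,]$, which makes it a touch more self-contained; the paper's version keeps the gap estimate and the Gaussian decay in two conceptually separate steps, at the cost of an extra intermediate object.
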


\begin{proof}
Let  $A\in\{ H_{\X_{\rm int}}^b, H_{\X_{\rm bd}}, H_{\X_{\rm ext}}^b
\}$. As previously discussed, see (\ref{eq:0gse}), $0 = \langle
\psi_0, A \psi_0 \rangle = \langle \psi_0, (A)_{\alpha} \psi_0
\rangle$. Setting $C_A = [H_V, A]$, we find that
\begin{eqnarray}
\| ( A )_{\alpha} \psi_0 \| & \leq & \frac{1}{\gamma} \| H_V ( A)_{\alpha} \psi_0 \| \nonumber \\
& = & \frac{1}{\gamma} \| [ H_V, (A)_{\alpha}] \psi_0 \| \nonumber \\
& = & \frac{1}{\gamma} \| (C_A)_{\alpha} \psi_0 \|.
\end{eqnarray}

The value of $\| (C_A)_{\alpha} \psi_0 \|$ can be estimated using
the spectral theorem. For any vector $f$, one has that
\begin{eqnarray}
\langle f, (C_A)_{\alpha} \psi_0 \rangle & = & \sqrt{ \frac{\alpha}{\pi}} \int_{- \infty}^{\infty} e^{- \alpha t^2} \langle f, \tau_t^V(C_A) \psi_0 \rangle dt \nonumber \\
& = &  \sqrt{ \frac{\alpha}{\pi}} \int_{- \infty}^{\infty} e^{- \alpha t^2} \langle f, e^{it H_V}C_A \psi_0 \rangle dt \nonumber \\
& = &  \sqrt{ \frac{\alpha}{\pi}} \int_{- \infty}^{\infty} e^{- \alpha t^2} \int_{\gamma}^{\infty} e^{it \lambda} d \langle f, E_{\lambda} C_A \psi_0 \rangle dt \nonumber \\
& = & \int_{\gamma}^{\infty} \sqrt{ \frac{ \alpha}{ \pi}} \int_{- \infty}^{\infty} e^{- \alpha t^2} e^{it \lambda} \, dt \, d \langle f, E_{\lambda} C_A \psi_0 \rangle \nonumber \\
& = & \int_{\gamma}^{\infty} e^{- \frac{ \lambda^2}{4 \alpha}} \, d
\langle f, E_{\lambda} C_A \psi_0 \rangle.
\end{eqnarray}
In the third equality above we have introduced the notation
$E_{\lambda}$ for the spectral projection corresponding to the
self-adjoint operator $H_V$, and we also used that $\langle \psi_0,
C_A \psi_0 \rangle = 0$. The last equality is a basic result
concerning Fourier transforms of gaussians (and re-scaling), (see
e.g.(5.59) in \cite{nasi3}). We conclude then that
\begin{equation}
\left| \langle f, (C_A)_{\alpha} \psi_0 \rangle \right| \, \leq \,
e^{ - \frac{ \gamma^2}{4 \alpha}} \| f \| \, \| C_A \|,
\end{equation}
and with $f = (C_A)_{\alpha} \psi_0$ we find that
\begin{equation}
\| (C_A)_{\alpha} \psi_0 \| \, \leq \, e^{- \frac{ \gamma^2}{4
\alpha}} \| C_A \|.
\end{equation}
Putting everything together,  we have shown that
\begin{equation} \label{eq:thlbd}
\| ( A)_{\alpha} \psi_0 \| \leq \frac{1}{ \gamma} \| (C_A)_{\alpha}
\psi_0 \| \leq \frac{e^{- \frac{ \gamma^2}{4 \alpha}}}{\gamma} \|
C_A \|.
\end{equation}
\end{proof}

The next step in the proof of Theorem~\ref{thm:gsapp} is to make a 
strictly local approximation of the smeared terms appearing in (\ref{eq:hamalpha}) above.
We defined these local approximations in Section~\ref{sec:setup}, see (\ref{eq:locham1}), and 
Proposition~\ref{prop:bd2} below provides an explicit estimate.

\begin{proposition} \label{prop:bd2} 
Let $\mathcal{V}$ be a countable, locally finite set equipped with a metric $d$ for which
(\ref{def:kappa}) holds, and let $\Phi$ be an interaction on $\mathcal{V}$ which satisfies assumptions {\bf
A1} - {\bf A4}. Let $V \subset \mathcal{V}$ be the finite set described in {\bf A4}, take $\ell >R$, and 
fix $\X \subset V$ satisfying (\ref{eq:setbd}). The estimate 
\begin{equation}
\begin{split}
\max \left\{ \left\| ( H_{\X_{\rm int}}^b)_{\alpha} \, - \, M_{\X}(\alpha) \right\|,  
\left\| ( H_{\X_{\rm bd}( \ell) })_{\alpha} \, - \, M_{\X_{\rm bd}(2 \ell)}(\alpha) \right\|,  
\left\| ( H_{\X_{\rm ext}}^b)_{\alpha} \, - \, M_{\X^c}(\alpha) \right\| \right\}  \\
\leq 2 C \frac{J^2 C_{\Phi} N_{\Phi}}{ \mu v} | \partial \X|
\left( 4 \sqrt{ \frac{2 \epsilon \mu}{\pi}} \, \ell^{3/2} + c \kappa_{\mu/2} e^{3 \mu R} \right) \, e^{- \frac{\ell}{\xi}}. \quad \quad \quad
\end{split}
\end{equation}
holds along the parametrization $2 \alpha \epsilon \ell = \mu v^2$ where $\mu \geq 2 \mu_0$. Here
\begin{equation}
0 < \frac{2}{\xi} = (1- \epsilon) \mu = \frac{ \gamma^2}{ \mu^2 v^2 + \gamma^2} \mu \, .
\end{equation} 
\end{proposition}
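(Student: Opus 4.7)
By the symmetry of the construction, it suffices to treat $(H_{\X_{\rm int}}^b)_\alpha - M_\X(\alpha)$; the other two cases follow by identical reasoning, with the bd case relying on the observation that $\X_{\rm bd}(\ell)$ is at distance at least $\ell$ from $V \setminus \X_{\rm bd}(2\ell)$ (this is why the smeared border observable uses the doubled length scale). Since both smearings share the same Gaussian kernel,
\[
(H_{\X_{\rm int}}^b)_\alpha - M_\X(\alpha) = \sqrt{\alpha/\pi}\int_{-\infty}^{\infty}\bigl[\tau_t^V - \tau_t^\X\bigr]\bigl(H_{\X_{\rm int}}^b\bigr)\,e^{-\alpha t^2}\,dt,
\]
and the standard Duhamel identity
\[
\tau_t^V(A) - \tau_t^\X(A) = i\int_0^t \tau_s^V\bigl(\bigl[H_V - H_\X,\,\tau_{t-s}^\X(A)\bigr]\bigr)\,ds
\]
reduces the problem to controlling $\|[H_V - H_\X,\,\tau_u^\X(H_{\X_{\rm int}}^b)]\|$ for $u \in [0,|t|]$ under the Gaussian weight.

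\textbf{The Lieb-Robinson step.} I would expand $H_V - H_\X = \sum_Y \Phi(Y)$ over $Y$ meeting $V \setminus \X$, and $H_{\X_{\rm int}}^b = \sum_X \Phi(X)$ over $X$ meeting $\X_{\rm int}$, and apply Theorem~\ref{thm:lrb} to each pair, obtaining
\[
\|[\Phi(Y),\,\tau_u^\X(\Phi(X))]\| \le c\,J^2\,\min(|\partial_\Phi X|,|\partial_\Phi Y|)\,e^{-\mu(d(X,Y)-v|u|)}.
\]
The crucial geometric input is that $d(X,Y) \ge \ell - 2R$: anchoring each $X$ at a point $x \in X \cap \X_{\rm int}$ and each $Y$ at $y \in Y \cap (V \setminus \X)$, the inequality $d(x,y)\ge \ell - R$ holds because $x$ lies at distance $\ge \ell$ from $\partial \X$ while any nontrivially contributing $y$ sits within $R$ of $\partial\X$, and the diameter bound $\diam(X), \diam(Y) \le R$ costs at most another $R$. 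To carry out the double sum while retaining only $|\partial\X|$-scaling, I would split $e^{-\mu d(x,y)} = e^{-\mu d(x,y)/2}\cdot e^{-\mu d(x,y)/2}$, bound one factor pointwise by $e^{-\mu(\ell-R)/2}$, and sum the other over one variable using $\kappa_{\mu/2}$; the inner sums $\sum_{X \ni x}\chi_\Phi(X) \le C_\Phi$ (and similarly for $Y$, with the extra $|X|$ weight from the commutator controlled by $N_\Phi$) together with the surface estimate $|\partial\X_{\rm int}| \le C\ell|\partial\X|$ from (\ref{eq:setbd}) yield
\[
\|[H_V - H_\X,\,\tau_u^\X(H_{\X_{\rm int}}^b)]\| \;\le\; K'\,|\partial\X|\,\ell\,e^{-\mu\ell/2}\,e^{\mu v|u|},
\]
where $K'$ absorbs the constants $c$, $J^2$, $C_\Phi$, $N_\Phi$, $\kappa_{\mu/2}$, and the $e^{3\mu R}$ correction coming from the $R$-scale boundary effects.

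\textbf{Gaussian integration and obstacle.} Substituting into the double integral, the inner integration gives $\int_0^{|t|} e^{\mu v u}\,du \le e^{\mu v|t|}/(\mu v)$, and the outer integral satisfies, after completing the square,
\[
\sqrt{\alpha/\pi}\int_{-\infty}^{\infty}(1+|t|)\,e^{\mu v|t| - \alpha t^2}\,dt \;\le\; K''\bigl(1 + 1/\sqrt{\alpha}\bigr)\,e^{(\mu v)^2/(4\alpha)}.
\]
Along the parametrization $\alpha = \mu v^2/(2\epsilon\ell)$ the exponent becomes $(\mu v)^2/(4\alpha) = \mu\epsilon\ell/2$, and combining with $e^{-\mu\ell/2}$ from the Lieb-Robinson step produces exactly $e^{-\mu(1-\epsilon)\ell/2} = e^{-\ell/\xi}$. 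The $1/\sqrt{\alpha} \sim \sqrt{\ell}$ from the shift of the Gaussian, multiplied by the $\ell$ from the surface estimate, gives the $\ell^{3/2}$ polynomial prefactor of the proposition; the $\ell$-independent term $c\,\kappa_{\mu/2}\,e^{3\mu R}$ captures the portion of the sum that arises from the diameter correction and does not scale with $\ell$. The main obstacle in the argument is the second step: to get $|\partial\X|$ rather than a volume factor, one has to correctly balance the exponential Lieb-Robinson decay against the $\kappa_{\mu/2}$ summability and the surface bound (\ref{eq:setbd}), keeping simultaneous control of the $R$-scale corrections that produce the $e^{3\mu R}$ constant. Once that bookkeeping is done, the Gaussian integration is essentially mechanical and the stated estimate follows.
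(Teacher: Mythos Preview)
Your overall strategy---Duhamel identity, term-by-term Lieb--Robinson estimate, Gaussian integration---is correct and proves the proposition, but it differs from the paper's argument in one structural respect, and your $\ell$-power bookkeeping contains a slip worth noting.

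\textbf{Where the paper diverges from you.} The paper does \emph{not} integrate the Lieb--Robinson estimate over all of $\mathbb{R}$. It splits the $t$-integral at a cutoff $T$. For $|t|>T$ it uses only the elementary derivative bound $\|\tau_t(A)-\tau_t^{\X}(A)\|\le 2\|[H_V,A]\|\,|t|$ together with Proposition~\ref{prop:commutator}; this is the sole place where the surface estimate $|\partial\X_{\rm int}|\le C\ell|\partial\X|$ enters, and it is what produces the $\ell^{3/2}$ term (an $\ell$ from Proposition~\ref{prop:commutator} and a $1/\sqrt{\alpha}\sim\sqrt{\ell}$ from the Gaussian tail). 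For $|t|\le T$ it applies Lieb--Robinson and obtains the $\ell$-independent term $c\,\kappa_{\mu/2}e^{3\mu R}$. The choice $\alpha T^2=(1-\epsilon)\mu\ell/2$ then equalizes the two exponential rates. Your route skips the split and runs Lieb--Robinson for all $t$; this is legitimate, since Theorem~\ref{thm:lrb} holds for all $t$ and $\sqrt{\alpha/\pi}\int_{\mathbb{R}} e^{\mu v|t|-\alpha t^2}\,dt\le 2e^{(\mu v)^2/4\alpha}$, and it actually yields a cleaner bound than the one stated, with no $\ell^{3/2}$ contribution at all.

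\textbf{The slip.} Your explanation of where the $\ell^{3/2}$ arises is not correct. In the Lieb--Robinson step, the only terms $Y$ of $H_V-H_{\X}$ contributing to the commutator are those meeting $\X$ (since $\tau_s^{\X}(\Phi(X))$ is supported in $\X$); each such $Y$ therefore contains a point of $\partial\X$. Anchoring the $Y$-sum there, as the paper does, gives $|\partial\X|$ directly---no factor of $\ell$, and $|\partial\X_{\rm int}|$ never appears in this step. Likewise, once you bound $\int_0^{|t|}e^{\mu v u}\,du\le e^{\mu v|t|}/(\mu v)$ there is no residual $(1+|t|)$ weight, and the Gaussian integral produces no extra $1/\sqrt{\alpha}$. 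Done correctly, your argument yields $\|(H_{\X_{\rm int}}^b)_\alpha - M_{\X}(\alpha)\|\le (2c\kappa_{\mu/2}J^2C_\Phi N_\Phi e^{3\mu R}/\mu v)\,|\partial\X|\,e^{-\ell/\xi}$, which of course implies the proposition as stated.
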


\begin{proof}
We will prove the estimate for $(H_{\X_{\rm int}}^b)_{\alpha} -
M_{\X}(\alpha)$, the other bounds follow similarly. Note that
\begin{equation}
\left\| ( H_{\X_{\rm int}}^b)_{\alpha} \, - \, M_{\X}(\alpha)
\right\| \, \leq \, \sqrt{ \frac{ \alpha}{ \pi}} \int_{-
\infty}^{\infty} \left\| \tau_t( H_{\X_{\rm int}}^b) \, - \,
\tau_t^{\X}(H_{\X_{\rm int}}^b) \right\| \, e^{- \alpha t^2} dt.
\end{equation}
To bound the integral above, we introduce a parameter $T>0$. For
$|t| > T$, we use that
\begin{eqnarray}
\left\| \tau_t \left( H_{\X_{\rm int}}^b   \right)  \, - \,
\tau_t^{\X} \left( H_{\X_{\rm int}}^b \right) \right\| \, &\le&
\int_0^{|t|}
\left\| \frac{d}{ds}\Big( \tau_{s} (H_{\X_{\rm int}}^b)  \, - \, \tau_{s}^{\X}(H_{\X_{\rm int}}^b) \Big)\right\| ds \nonumber \\
& \leq & 2 \|[H_V, H_{\X_{\rm int}}^b]\| |t|.
\end{eqnarray}
{F}rom this, it follows readily that
\begin{equation}
 \sqrt{ \frac{ \alpha}{ \pi}} \int_{|t|>T} \left\| \tau_t ( H_{\X_{\rm int}}^b)  \, - \, \tau_t^{\X}( H_{\X_{\rm int}}^b) \right\|
\, e^{- \alpha t^2} dt  \leq  \frac{8 C J^2 C_{\Phi} N_{\Phi} |
\partial \X|}{\sqrt{\alpha\pi}} \ell e^{- \alpha T^2}, 
\end{equation}
where we used Proposition~\ref{prop:commutator}.

For $|t| \leq T$, the estimate below is an immediate consequence of
Lemma 3.1 in \cite{nasi4}:
\begin{equation} \label{eq:smt}
\left\| \tau_t ( H_{\X_{\rm int}}^b )  \, - \, \tau_t^{\X}(
H_{\X_{\rm int}}^b) \right\| \, \leq \, \int_0^{|t|} \left\| \left[
H_V-H_{\X}, \tau_s^{\X} \left( H_{\X_{\rm int}}^b \right)
\right]\right\| \, ds.
\end{equation}
The above commutator may be written as
\begin{equation}  \label{eq:com2}
\left[ H_V - H_{\X}, \tau_s^{\X} \left( H_{\X_{\rm int}}^b \right)
\right] \, = \,
 \sum_{ \stackrel{X \subset V:}{X \cap \X \neq \emptyset, X \cap V \setminus \X \neq \emptyset}} \sum_{\stackrel{Y \subset V:}{Y \cap \X_{\rm int} \neq \emptyset}} \left[ \Phi(X), \tau_s^{\X} \left( \Phi(Y) \right)
 \right] \, .
\end{equation}
Estimating this, we find that
\begin{eqnarray}
\left\| \left[ H_V - H_{\X}, \tau_s^{\X} \left( H_{\X_{\rm int}}^b
\right) \right]  \right\| & \leq &  \sum_{x \in \partial \X}
\sum_{y \in \X_{\rm int}} \sum_{\stackrel{X \subset V:}{x \in X}} \sum_{\stackrel{Y \subset V:}{y \in Y}}  \left\|  \left[ \Phi(X), \tau_s^{\X} \left( \Phi(Y) \right) \right] \right\| \nonumber \\
 & \leq & c J^2  \sum_{x \in \partial \X}
\sum_{y \in \X_{\rm int}} \sum_{\stackrel{X \subset V:}{x \in X}}
\sum_{\stackrel{Y \subset V:}{y \in Y}} \chi_{\Phi}(X)
\chi_{\Phi}(Y) \min \left[ | \partial_{\Phi} X|, | \partial_{\Phi} Y| \right] e^{-
\mu \left( d(X,Y) - v |s| \right)} \, ,
\end{eqnarray}
where we have used the Lieb-Robinson bound, i.e.
Theorem~\ref{thm:lrb} with $\mu \geq \mu_0$ from (\ref{def:kappa}). Since $\Phi$ has a finite range $R$, it is
clear that $d(x,y) - 2 R \leq d(X, Y)$. This implies that
\begin{equation}
\left\| \left[ H_V - H_{\X}, \tau_s^{\X} \left( H_{\X_{\rm int}}^b
\right) \right]  \right\| \leq  c J^2 C_{\Phi} N_{\Phi} e^{2 \mu R}
\sum_{x \in \partial \X} \sum_{y \in \X_{\rm int}} e^{- \mu \left(
d(x,y) - v |s| \right)} \, .
\end{equation}
Now, for each fixed $x \in \partial \X$, $d(x,y) \geq \ell - 2 R$.
Summing on all $y$ yields
\begin{equation}
\left\| \left[ H_V - H_{\X}, \tau_s^{\X} \left( H_{\X_{\rm int}}^b
\right) \right]  \right\| \leq  c \kappa_{\mu/2} J^2 C_{\Phi} N_{\Phi} e^{3 \mu R}
| \partial \X| e^{- \mu \left( \ell/2 - v |s| \right)} \, ,
\end{equation}
for $\mu \geq 2 \mu_0$. Comparing back to (\ref{eq:smt}), this bound demonstrates that
\begin{equation} \label{eq:smt2}
\left\| \tau_t ( H_{\X_{\rm int}}^b )  \, - \, \tau_t^{\X}(
H_{\X_{\rm int}}^b) \right\| \, \leq \, c \kappa_{\mu/2} J^2 C_{\Phi} N_{\Phi} e^{3
\mu R} | \partial \X| e^{- \mu \ell /2 } \frac{(e^{\mu v |t|} - 1)}{ \mu
v}  \, ,
\end{equation}
and hence, for any $T>0$, we have that
\begin{eqnarray}
\sqrt{ \frac{ \alpha}{ \pi}} \int_{-T}^{T} \left\| \tau_t (
H_{\X_{\rm int}}^b )  \, - \, \tau_t^{\X}( H_{\X_{\rm int}}^b)
\right\|  \, e^{- \alpha t^2} dt & \leq &  c \kappa_{\mu/2} e^{3 \mu R} \frac{J^2 C_{\Phi} N_{\Phi}}{ \mu v}
 | \partial \X| e^{- \mu \ell /2 } \, \sqrt{ \frac{ \alpha}{ \pi}} \int_{-T}^{T} e^{ \mu v |t|} e^{- \alpha t^2} \, dt \nonumber \\
& \leq &   2  c \kappa_{\mu/2} e^{3 \mu R} \frac{J^2 C_{\Phi} N_{\Phi}}{ \mu v}
| \partial \X| e^{- \mu \ell /2} \, e^{\frac{\mu^2 v^2}{4 \alpha}} \, .
\end{eqnarray}

Adding our results for large and small $T$, it is clear that
\begin{equation} \label{eq:hlmldif}
\begin{split}
\left\| ( H_{\X_{\rm int}}^b)_{\alpha} \,  - \, M_{\X}(\alpha) \right\| \, & \leq \, \\
 2 \frac{J^2 C_{\Phi} N_{\Phi}}{ \mu v} | \partial \X| e^{- \alpha T^2} & \left( \frac{4 C \mu v \ell}{\sqrt{\alpha\pi}} +  c \kappa_{\mu/2} e^{3 \mu R} \cdot
\mbox{exp} \left[ \alpha T^2 - \frac{\mu}{2} \left( \ell -\frac{\mu v^2}{ 2
\alpha} \right) \right] \right).
\end{split}
\end{equation}
We now choose a parametrization which illustrates the decay. Let $\alpha$ satisfy 
$$
\frac{\mu v^2}{2 \alpha} =  \epsilon \ell  \quad \mbox{ with } \quad 0 < \epsilon = \left(1+ \frac{ \gamma^2}{ \mu^2 v^2} \right)^{-1} < 1,
$$
and choose $T$ to satisfy the equation
\begin{equation}
\alpha T^2 - \frac{\mu}{2} \left( \ell - \frac{ \mu v^2}{ 2 \alpha} \right) =
0.
\end{equation}
In this case, $2 \alpha T^2 = (1 - \epsilon) \mu \ell$, and thus 
\begin{equation}
\left\| ( H_{\X_{\rm int}}^b)_{\alpha} \,  - \, M_{\X}(\alpha)
\right\| \leq 2 \frac{J^2 C_{\Phi} N_{\Phi}}{ \mu v} | \partial \X| e^{- (1- \epsilon) \mu \ell /2 }
\left( 4 C \sqrt{ \frac{2 \epsilon \mu}{\pi}} \, \ell^{3/2} + c \kappa_{\mu/2} e^{3 \mu R} \right) .
\end{equation}
\end{proof}

Equipped with theses estimates, we are now ready to prove Lemma
\ref{lem:bd2}

\begin{proof}(of Lemma~\ref{lem:bd2}:)
Using (\ref{eq:hamalpha}) and Proposition~\ref{prop:bd2}, we find
that
\begin{equation} \label{eq:hamappbd}
\begin{split}
\left\| H_V \, - \, \left(M_{\X}( \alpha) + M_{\X_{\rm bd}}(\alpha)
+M_{\X^c}( \alpha)
\right) \right\| \,  \hspace{3.5cm} \\
\leq \, 6 C \frac{J^2 C_{\Phi} N_{\Phi}}{ \mu v} | \partial \X|
\left( 4 \sqrt{ \frac{2 \epsilon \mu}{\pi}} \, \ell^{3/2} + c \kappa_{\mu/2} e^{3 \mu R} \right) \, e^{- \frac{\ell}{\xi}} \, ,
\end{split}
\end{equation}
along the parametrization $2 \alpha \epsilon \ell = \mu v^2$. This is the bound claimed in (\ref{eq:totbd5}).

To see that (\ref{eq:Monpsi5}) is true, note that, for example,
\begin{equation} \label{eq:mxpsi}
\| M_{\X}( \alpha) \psi_0 \| \, \leq \, \| (H^b_{\X_{\rm
int}})_{\alpha} \psi_0 \| \, + \, \| \left( ( H^b_{\X_{\rm
int}})_{\alpha} - M_{\X}(\alpha) \right) \psi_0 \| \, .
\end{equation}
Using Propositions~\ref{prop:commutator} and Proposition~\ref{prop:bd}, it is clear
that
\begin{eqnarray}
 \| (H^b_{\X_{\rm int}})_{\alpha} \psi_0 \| & \leq & \frac{  \left\| \left[ H_V, H^b_{\X_{\rm int}} \right] \right\|}{ \gamma} e^{- \frac{ \gamma^2}{4 \alpha}} \nonumber \\
 & \leq & 4 C \frac{J^2 C_{\Phi} N_{\Phi} }{ \gamma} | \partial \X | \ell e^{- \frac{\ell}{ \xi}} \,,
\end{eqnarray}
along the parametrization $2 \alpha \epsilon \ell = \mu v^2$. Combining this with Proposition~\ref{prop:bd2}, we
have that
\begin{equation}
\| M_{\X}( \alpha) \psi_0 \| \, \leq \, 2 C J^2 C_{\Phi} N_{\Phi} | \partial \X | \,
\left( \frac{ 2 \ell}{ \gamma}  + \frac{4}{ \mu v} \sqrt{ \frac{ 2 \epsilon \mu}{ \pi}} \ell^{3/2} +  \frac{c \kappa_{\mu/2} e^{3 \mu R}}{ \mu v} \right) \, e^{- \frac{\ell}{ \xi}} \, .
\end{equation}
A similar bound applies to both $\| M_{\X_{\rm bd}}( \alpha) \psi_0
\|$ and $\| M_{\X^c}( \alpha) \psi_0 \|$. This completes the proof of the
Lemma~\ref{lem:bd2}.
\end{proof}

\noindent {\bf Acknowledgements:}

E. H would like to acknowledge support through a Junior Research
Fellowship at the Erwin Schrodinger Institute in Vienna, where part of this work
was done. A part of this work was also supported by the National Science
Foundation: B.N. and S. M. under Grants \#DMS-0605342 and \#DMS-0757581
and R.S. under Grant \#DMS-0757424.

\end{document}